\documentclass{article}
\usepackage{natbib}
\usepackage{amsthm}
\usepackage{amsmath}
\usepackage{graphicx}
\usepackage{algorithmic}

\newcommand{\bigO}[1]{\mathcal{O}\!\left(#1\right)}
\newcommand{\Omg}[1]{\Omega\!\left(#1\right)}

\newtheorem{theorem}{Theorem}[section]
\newtheorem{definition}{Definition}[section]
\newtheorem{lemma}[theorem]{Lemma}
\usepackage{booktabs} % For formal tables
\usepackage[ruled]{algorithm2e} % For algorithms

\usepackage[utf8]{inputenc} % allow utf-8 input
\usepackage[T1]{fontenc}    % use 8-bit T1 fonts
\usepackage{hyperref}       % hyperlinks
\usepackage{url}            % simple URL typesetting
\usepackage{booktabs}       % professional-quality tables
\usepackage{amsfonts}       % blackboard math symbols
\usepackage{nicefrac}       % compact symbols for 1/2, etc.
\usepackage{microtype}      % microtypography
\usepackage{xcolor}         % colors

\title{Fast Core Identification}

\author{%
  Irene Aldridge\thanks{irenealdridge.com; I am grateful to Laura Doval and participants of Economics and Computation 2024 (EC2024) for valuable suggestions and advice.} \\
  \texttt{irene.aldridge@gmail.com} \\
}

\begin{document}

\maketitle

\begin{abstract}
This paper examines the computational complexity of the \emph{Core Identification Problem} (CIP) in one-sided matching markets governed by the Top Trading Cycles (TTC) algorithm.  The central contribution is a formal complexity separation: this paper proves
that identifying which agents receive a core allocation is strictly easier than computing the full TTC allocation.  Specifically, we show that CIP can be solved in $\bigO{Ln}$ time, where $L$ is the maximum number of preferences reported per
agent, by computing the leading eigenvector of a preference-derived Markov transition matrix via randomized SVD\@.  For sparse preference profiles ($L = \bigO{1}$, as in the NYC school choice where $L = 12$), this yields an algorithm $\bigO{n}$.  This result strictly improves on the $\bigO{n \log n}$ complexity
of the full TTC allocation (\cite{SabanSethuraman2013}) and matches the $\Omg{n}$ information-theoretic lower bound, establishing asymptotic optimality. The method inherits all properties of TTC: Pareto efficiency, individual
rationality, and strategy-proofness, and is robust to preference noise for sufficiently large~$n$.
\end{abstract}
\section{Introduction}

There are more than 10,000 digital coins listed and traded on crypto exchanges. Most of the trading in the coins still happens relative to just one reference asset: the U.S. dollar. To swap from digital coin A to digital coin B, a trader needs to complete two transactions: first, sell digital coin A and receive U.S. dollars, and then use the U.S. dollars to buy the coin B. The swap incurs double transaction costs and delays.  This paper proposes a methodology to find stable matches directly among various coins or other objects. Based on the Top Trading Cycles algorithm of \cite{GaleShapley1962}, the algorithm proposed in this paper leverages eigenvalue decomposition to find stable matches, known as core allocation, in $O(n)$, where $n$ is the number of agents in the market.  

\subsection{The Matching Problem}
We study the \emph{one-sided object allocation problem} where:
- $n$ agents $\mathcal{A} = \{1,\ldots,n\}$ each initially own one object
- Each agent $i$ has strict ordinal preferences over all objects
- Goal: Find a reallocation that is Pareto efficient, individually rational, 
  and strategy-proof

\textbf{Definition 1.1 (The Core):} An allocation is in the \emph{core} if 
no coalition of agents can improve their outcomes through further trades.

\subsection{Motivation}
The Top Trading Cycles (TTC) algorithm \citep{SHAPLEY197423} is the gold standard
for solving this problem, used in:
\begin{itemize}
    \item School choice systems serving millions of students \citep{abdulkadiroglu2003, abdulkadiroglu2005boston, 
      abdulkadiroglu2005nyc, pathak2013}
    \item Housing allocation markets for distributing campus housing or 
      public housing \citep{abdulkadiroglu1998}
    \item Kidney exchange programs       \citep{roth2004}
\item Labor markets for matching medical residents and hospitals 
      \citep{roth1984, roth2002}
\end{itemize}

However, TTC has computational complexity $O(n \log n)$ \citep{athanassoglou2011}.
For large-scale applications (e.g., NYC school choice with $n > 100,000$), 
faster algorithms are needed.

\subsection{Our Contribution}
We propose a novel eigenvector-based method that:
1. Identifies which agents are in the core in $O(n)$ time (finding the max of the eigenvector)
2. Preserves all TTC properties (Pareto efficiency, strategy-proofness, 
   individual rationality)
3. Is robust to preference noise for large $n$

Our key insight is that by representing agent preferences as a Markov transition
matrix, the steady-state distribution reveals core membership—agents with 
the highest steady-state probabilities are in the core.

Core identification is crucial because the core represents the set of stable, fair allocations from which no group of agents can profitably deviate. In practical matching markets (schools, housing, organ exchange), identifying the core determines which allocations will actually persist in practice and which agents have guaranteed outcomes regardless of strategic behavior. The proposed methodology identifies the core in $O(n)$, much faster than existing methods.

Market efficiency bounds and welfare guarantees
\section{Related Work}

\subsection{Approximate mechanisms} 
Several papers study approximate 
mechanisms that trade optimality for other desiderata. \citep{ZHOU1990123} showed that no exact mechanism can be efficient, truthful, 
and symmetric simultaneously. \citep{Hartline2012},  \citep{devanur2015, abebe2020} develop approximate mechanisms; we differ by computing exact TTC outcomes faster.

\subsection{Computational complexity} \citep{lock2024computational} show 
that certain housing market problems are PPAD-complete. Our result does not 
contradict this—we compute the specific TTC allocation (which has known polynomial-time algorithms), not arbitrary equilibria.

\subsection{Markov chain methods} \citep{leshno2020} first connected the TTC to Markov chains through their cutoff structure. We extend this by using 
eigenvector analysis for computational speedup.

\subsection{Fast linear algebra} 
Our complexity relies on randomized SVD 
\citep{halko2011, struski2024} and fast eigenvector computation 
\citep{andersen2006local, SunEtAl2020}.
One of the most popular matching algorithms that have stood the test of time is the Top Trading Cycles. The Top Trading Cycles Algorithm due to Gale and \cite{SHAPLEY197423} is a well-established Pareto-efficient way to allocate objects based on agents' preferences. Working with strict preferences of agents over objects, the algorithm estimates allocations that are individually rational, envy-free, and strategy-proof.

\cite{SHAPLEY197423} showed that the core in the Top Trading Cycles algorithm is nonempty. \cite{ROTH1977131} showed that when agents have strict preferences over objects, the core consists of a single allocation.  

The computation of complexity in a TTC was first introduced by \cite{Gale1984}. \cite{lock2024computationalcomplexityhousingmarket} show that the core of the TTC algorithm is PPAD-complete, that is, not solvable in polynomial time.  
\cite{ChenEtAl2009} showed that the Arrow-Debreu equilibria are PPAD-complete. \cite{DaskalakisEtAl2009} and \cite{ChenEtAl2009b} concluded the same for the Nash equilibria. However,  (\cite{SabanSethuraman2013} showed that TTC can be solved in at least $\mathcal{O}(n log(n))$ time, similar to random forests (\cite{SCORNET201672}).  

This paper proposes a new optimization process that finds just the core of the TTC algorithm in constant time ($\mathcal{O}(1)$), independent of the polynomial function of the TTC structure and the number of agents and objects. The technique uses eigenvectors of the TTC graph to summarize stable allocations. The proposed technique also solves the full TTC in $\mathcal{O}(n log(n))$. However, we can solve for the core in ($\mathcal{O}(1)$) without solving the entire TTC. Specifically, to reduce the computational burden of TTC, this paper proposes the following:
\begin{enumerate}
    \item Map TTC to a Markov Matrix, similar to \cite{LeshnoLo2020}
    \item Obtain the core solution as the largest coefficient of the first eigenvector, which can be solved in $O(n)$ using randomized SVD or even $O(1)$ using hardware acceleration independent of the number of agents or objects (\cite{SunEtAl2020}). 
\end{enumerate}

Our paper contributions include:
\begin{enumerate}
    \item Develop a methodology for fast probabilistic identification of the core and the coalition cycles that identifies the core in O(n)
    \item Show that the proposed methodology is robust to noise and even misreported preferences whenever the number of agents in the system is sufficiently large
    \end{enumerate}

\section{Preliminaries and Model}

\subsection{Matching Model}
\begin{definition}[Matching Problem]
A \emph{one-sided matching problem} consists of the following:
\begin{itemize}
\item A set of agents $\mathcal{A} = \{1, \ldots, n\}$
\item A set of objects $\mathcal{O} = \{1, \ldots, n\}$
\item Initial endowments $e: \mathcal{A} \to \mathcal{O}$ with $e(i) = i$
\item Preference profile $\succ = (\succ_1, \ldots, \succ_n)$ where 
      $\succ_i$ is the strict ordering of agent $i$ over $\mathcal{O}$
\end{itemize}
An \emph{allocation} $\mu: \mathcal{A} \to \mathcal{O}$ is a bijection 
that assigns exactly one object to each agent.
\end{definition}

\begin{definition}[Pareto Efficiency]
An allocation $\mu$ is \emph{Pareto efficient} if there is no other 
allocation $\mu'$ such that $\mu'(i) \succeq_i \mu(i)$ for all $i$ with 
strict preference for at least one agent.
\end{definition}

\begin{definition}[Individual Rationality]
An allocation $\mu$ is \emph{individually rational} if $\mu(i) \succeq_i e(i)$
for all agents $i$ (no one is worse off than their initial endowment).
\end{definition}

\begin{definition}[Strategy-Proofness]
A mechanism is \emph{strategy-proof} if no agent can benefit by 
misreporting their preferences, i.e., truthful reporting is a 
dominant strategy.
\end{definition}

\begin{definition}[The Core]
An allocation $\mu$ is in the \emph{core} if no coalition 
$S \subseteq \mathcal{A}$ can reallocate their endowments among themselves
to make all members strictly better off. For TTC with strict preferences,
the core contains exactly one allocation \citep{roth1977}.
\end{definition}

\subsection{Top Trading Cycles Algorithm}
\textbf{Algorithm (TTC):} \citep{SHAPLEY197423, gale1974}
\begin{enumerate}
\item Each agent points to the agent holding their most-preferred object
\item Identify all cycles in the resulting directed graph
\item Match agents in cycles to the objects they are pointing to
\item Remove matched agents and objects; repeat until all are matched
\end{enumerate}

\textbf{Properties:} TTC produces the unique core allocation and satisfies
Pareto efficiency, individual rationality, and strategy-proofness 
\citep{roth1977, roth1982}.

\begin{algorithm}[h]\label{alg:fast-core}
\caption{Fast Core Identification via Eigenvector Method}
\begin{algorithmic}[1]
\REQUIRE Preference profile $P = (P_1, \ldots, P_n)$ where $P_i$ is agent 
         $i$'s strict ranking over objects
\ENSURE Set of agents $C$ in the core with their assignments

\STATE \textbf{Step 1: Construct Markov Matrix}
\STATE Initialize $n \times n$ matrix $U$ with zeros
\FOR{each agent $i = 1$ to $n$}
    \FOR{each object $j = 1$ to $n$}
        \STATE $U[i,j] \gets \text{rank}(j \text{ in } P_i) / n$ 
               \COMMENT{Higher rank = higher probability}
    \ENDFOR
    \STATE Normalize: $U[i,:] \gets U[i,:] / \sum_j U[i,j]$ 
           \COMMENT{Ensure row sums to 1}
\ENDFOR

\STATE \textbf{Step 2: Compute Steady-State Distribution}
\STATE $\pi \gets$ first eigenvector of $U$ (eigenvalue = 1)
\STATE Normalize: $\pi \gets \pi / \|\pi\|_1$

\STATE \textbf{Step 3: Identify Core}
\STATE Identify  $\pi_c = \max(\pi)$ as the top agent belonging to the core. Remove $\pi_c$ and repeat the process $k$ times, where $k$ is the core size.
\STATE $C \gets \{i_1, i_2, \ldots, i_k\}$ where $k$ is core size
\RETURN Core members $C$ and their assignments from TTC
\end{algorithmic}
\end{algorithm}

\textbf{Complexity Analysis:}
The construction of an unconstrained Markov Matrix is known to take place $O(n^2)$. However, in many large-scale examples, such as NYC school choice, each student is asked to report only the top 12 of their preferences. As a result, the complexity of the Markov matrix is only $O(12n)\sim O(n)$ for $n\gg 12$ using randomized SVD. The complexity of core identification is $O(n)$, since only the first singular vector needs to be computed.

- Lines 2-7: $O(n)$ to construct the Markov matrix
- Line 9: $O(n)$ to calculate the first singular vector using randomized SVD ($O(1)$ to calculate the first eigenvector using hardware acceleration of \citep{SunEtAl2020})
- Line 13: $O(n)$ to find the maximum coefficients of the first singular vector to identify core allocation members.
- Total: $O(n)$

%% Complexity summary table
\begin{table}[h]
\centering
\caption{Complexity Summary for Algorithm~\ref{alg:fast-core}}
\label{tab:complexity}
\renewcommand{\arraystretch}{1.3}
\begin{tabular}{c|p{3.5cm}|l}
\toprule
\textbf{Phase} & \textbf{Operation} & \textbf{Complexity} \\
\midrule
1 & Markov matrix construction
  & $O(Ln) \sim O(n)$ \quad for $L \ll n$ \\
2 & Randomized SVD (leading eigenvector)
  & $O(nk)$;\; $O(1)$ with hardware \cite{SunEtAl2020} \\
3 & Sort + argmax
  & $O(n)$ \\
4 & Full TTC (if required)
  & $O(n \log n)$ \\
\bottomrule
\end{tabular}
\end{table}
\subsection{Markov Preference Construction and Normalization}
Suppose that the preferences of each agent $i$ over each object $\{j\}\in J$ are strict and ordinal. We can think of these preferences as an online rating system in which each agent $i$ ranks each object $j \in J$ with a unique agent $i$ rating $r_{ij}\in [1,...,J]$.

Define \emph{Markov preferences} $\{r_{ij, N \times N} \in R$, $r_{ij} \in [1,J]$ such that a rating $r_{ij}=J$ indicates that the object $j$ is the main choice of agent $i$, and a rating $r_{iz,z\neq j}=1$ means that agent $i$ wants to avoid object $z$ at all costs. If $r_{ik,k\neq j, k\neq z}=J-1$, then the object $k$ is the second choice of agent $i$', etc. 

Note that cardinal preferences can be mapped into Markov preferences. In other words: If agent $i$ strictly prefers object $j$ to object $j-1$, ... Object $1$, then the Markov preferences of agent $i$ can be specified as $r_i=[\frac{1}{J}, \frac{2}{J}, ... \frac{J-1}{J}, 1]$. In other words, Markov preferences induce a uniform distribution over all objects where for each agent $i$ each strict preference is different.
\begin{equation}
    j \succ_i k
\end{equation}
is translated into
\begin{equation}
    r_{ij} > r_{ik}    
\end{equation}
The uniform assumption is consistent with the "eating speeds" of \cite{BogomolnaiaMoulin2004}, but does not necessarily have to be true, in general.

Next, we construct a normalized preference matrix $\{g_{ij}\}=G$, where for every agent $i$, all ranks $r_{ij}$ are divided by the sum of the agent $i$''s rankings: $g_{ij} = g_{ij}/\sum_i g_{ij}$. 

Note that a normalized representation $G$ of a full graph of relative rankings $R$ with each row normalized to 1 is equivalent to \cite{BogomolnaiaMoulin2004} 'eating speeds'.    

We observe that the normalized preference matrix $G$ satisfies the conditions for a Markov Transition Probability matrix: it is a full graph with positive values between 0 and 1, where each row sums up to 1. This transformation completes our construction of the Markov matrix. 

An alternative way to construct agent preferences is to follow a framework similar to \cite{HyllandZeckhauser1979}. In \cite{HyllandZeckhauser1979}, each individual $i\in\{1\dots I\}$ must be assigned to one and only one job $j\in\{1\dots J\}$ and receives utility $u_{ij}$ by being assigned to the job $j$. The individual $i$ may know a probability distribution of being assigned to one of several jobs. Then, $i$' ex ante valuation of these jobs is the expected value over his utility $u$ and probability distribution $\textbf{P} = \{p_{ij}\}$. In this context, $u_{ij}$ is a von Neumann Morgenstern utility assessment for individual $i$ on the job $j$. As in \cite{HyllandZeckhauser1979}, we do not require that individuals truthfully disclose their utilities $u$.

\begin{equation}
U = 
\begin{bmatrix}
u_{11} & u_{12} & \dots & u_{1J} \\
u_{21} & u_{22} & \dots & u_{2J}\\
\dots & \dots & \dots & \dots \\
u_{I1} & u_{I2} & \dots & u_{IJ}\\
\end{bmatrix}
\end{equation}

\cite{HyllandZeckhauser1979} propose to solve the matching problem with a market solution. They assign a budget constraint $B_i$ to each individual and then solve for optimal allocation using the top trading cycles algorithm of \cite{ROTH1977131}. 

This paper proposes a solution to finding the core of the top trading cycles using the singular value decomposition (SVD) technique, which is discussed in detail in the following. We show that the proposed solution can run in a near-constant time ($\mathcal{O}(1) + \epsilon$). 

This paper also contributes to the literature on approximate matching mechanisms. Recent literature has focused on identifying approximate matching mechanisms that are efficient, truthful, and symmetric. \cite{ZHOU1990123} showed that no exact mechanism can be efficient, truthful, and symmetric at the same time. \cite{Hartline2012}, \cite{DEVANUR2015103} and \cite{abebe2020truthfulcardinalmechanismonesided} propose approximations to optimal algorithms to accommodate all three characteristics. The mechanism proposed herein is efficient, approximately truthful, and symmetric. 

\section{Main Results}

\begin{theorem}{(Markov Property of Multi-Agent Trading Cycles)}

Let $\mathcal{A} = \{1, 2, \ldots, n\}$ be a set of agents and $\mathcal{O} = \{1, 2, \ldots, n\}$ be a set of objects with initial endowments $e: \mathcal{A} \to \mathcal{O}$ where $e(i) = i$. Let $U = [u_{ij}]_{n \times n}$ be a Markov matrix of utilities satisfying $u_{ij} \geq 0$ and $\sum_{j=1}^n u_{ij} = 1$ for all $i$.

Define the state space $\mathcal{S} = \mathcal{O}$ where each state represents the current holder of object $j$. The stochastic process $\{X_t^{(j)}\}_{t \geq 0}$ tracking object $j$'s location satisfies the Markov property:

$$\mathbb{P}(X_{t+1}^{(j)} = k \mid X_t^{(j)} = i, X_{t-1}^{(j)}, \ldots, X_0^{(j)}) = \mathbb{P}(X_{t+1}^{(j)} = k \mid X_t^{(j)} = i) = p_{ik}$$

where $p_{ik}$ is derived from the utility matrix $U$ through the TTC mechanism.
    
\end{theorem}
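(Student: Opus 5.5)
The plan is to make the statement precise enough to be provable, and then check the Markov property directly from the construction. As written, the theorem does not say how object $j$ moves. I will therefore fix the dynamics first: at each step $t$ the current holder $i = X_t^{(j)}$ passes object $j$ to agent $k$ with probability $p_{ik}$. Here $p_{ik}$ is a deterministic function of the row $U[i,:]$ alone, namely the probability that $i$ points to $k$ in the randomized TTC pointing step.

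The natural choice is $p_{ik} = u_{ik}$. This reading treats $u_{ik}$ as the probability that agent $i$ points to the holder of object $k$, which is agent $k$ because $e(k)=k$. It is consistent with the normalization in Algorithm~\ref{alg:fast-core} and with the "eating speed" interpretation. The hypotheses $u_{ij}\ge 0$ and $\sum_j u_{ij}=1$ then make $P=[p_{ik}]$ row-stochastic, so $P$ is a legitimate transition kernel on the finite state space $\mathcal{S}=\mathcal{O}$.

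The key step is to build the process on a product probability space. Take independent families $\{\xi_{t,i}\}_{t\ge0,\,i\in\mathcal{A}}$ with $\mathbb{P}(\xi_{t,i}=k)=u_{ik}$, where $\xi_{t,i}$ is agent $i$'s random pointing choice in round $t$. Define the update $X_{t+1}^{(j)} = \xi_{t,X_t^{(j)}}$.

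Since $X_t^{(j)}$ and the past values $X_{t-1}^{(j)},\dots,X_0^{(j)}$ are measurable with respect to $\sigma(X_0^{(j)},\xi_{s,\cdot}: s<t)$, they are independent of the round-$t$ variables $\{\xi_{t,i}\}_i$. Conditioning on $X_t^{(j)}=i$ and on any past history of positive probability, we get
\[
\mathbb{P}\bigl(X_{t+1}^{(j)}=k \mid X_t^{(j)}=i, X_{t-1}^{(j)},\dots,X_0^{(j)}\bigr) = \mathbb{P}(\xi_{t,i}=k) = u_{ik} = p_{ik}.
\]
This quantity does not depend on the history, and it also equals $\mathbb{P}(X_{t+1}^{(j)}=k\mid X_t^{(j)}=i)$. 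That gives the claimed Markov property with time-homogeneous kernel $P$. It is the standard "random mapping representation" argument.

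The main obstacle is reconciling this memoryless model with actual TTC. In deterministic TTC, agents and objects are removed once matched, so the pointing at round $t$ depends on which agents remain, and hence on the whole history. The transition probabilities would then not be $u_{ik}$ and would not be time-homogeneous. I would handle this in one of two ways.

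The first option is to state explicitly that the theorem concerns the pre-removal, first-round pointing dynamics, that is, the random-pointing relaxation of TTC. This is the object whose stationary distribution Algorithm~\ref{alg:fast-core} computes.

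The second option is to extend the state to $(X_t^{(j)}, R_t)$, where $R_t\subseteq\mathcal{A}$ is the set of remaining agents. Use the renormalized kernel $p^{R}_{ik}=u_{ik}/\sum_{k'\in R}u_{ik'}$ for $k\in R$. The same independence argument then shows that the augmented process is Markov, and it reduces to $p_{ik}=u_{ik}$ when $R=\mathcal{A}$.

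I would present the first version as the theorem and remark on the second, since only the first matches the clean formula $p_{ik}$ in the statement.
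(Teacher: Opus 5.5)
Your proof is correct for the model you fix, but it takes a different route from the paper.

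\textbf{The paper's route.} The paper takes the whole allocation vector $s_t=(o_1^t,\dots,o_n^t)$ as the state. It argues that a TTC round is a deterministic function of $s_t$, so the allocation process is Markov. It then adds the Chapman--Kolmogorov equation and asserts $p_{ij}=u_{ij}$ ``in the probabilistic interpretation.'' This keeps the argument attached to actual TTC, but it never produces the displayed identity. Deterministic TTC gives $0/1$ transitions that change as agents are removed, and each object moves at most once, so the kernel $U$ never arises. Even after randomizing, a Markov property for $s_t$ does not automatically pass to the single coordinate $X_t^{(j)}$.

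\textbf{Your route.} Your random-mapping construction $X_{t+1}^{(j)}=\xi_{t,X_t^{(j)}}$, with round-$t$ pointing variables independent of the past, makes the randomization explicit. It proves the displayed equality directly, including time-homogeneity. The price, which you acknowledge, is that the theorem becomes a statement about a random-pointing relaxation rather than about TTC. Indeed, with shared $\xi_{t,i}$ the joint process $(X_t^{(1)},\dots,X_t^{(n)})$ is not even a sequence of allocations, since two objects can reach the same agent.

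Your augmented state $(X_t^{(j)},R_t)$ is the rigorous version of the paper's ``depends only on $s_t$'' reasoning. It works once you specify $R_{t+1}$ as a function of $R_t$ and the round-$t$ pointing variables, so that the full state is again a fixed map of the current state and fresh noise.

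\textbf{A caveat you share with the paper.} In TTC, if $i$ points to $k$ then $i$ \emph{receives} $k$'s object, and the object held by $i$ goes to the cycle member who points at $i$. A chain with kernel $p_{ik}=u_{ik}$ therefore follows pointers, i.e.\ where $i$'s demand goes, not where object $j$ travels. Tracking the object would require a rule built from column $i$ of $U$. Under independent random pointing that rule is not even single-valued, since zero or several agents may point at $i$.

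This leaves your Markov-property argument intact, because it works for any row-stochastic kernel realized by a random mapping. However, the phrase ``the current holder passes object $j$ to agent $k$'' misdescribes TTC and should be replaced by an explicitly stated modelling assumption.
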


\begin{proof}

  We define the state space and the transition mechanism as follows: 
  \begin{itemize}
      \item \textbf{State space:} $\mathcal{S} = \{1, 2, \ldots, n\}$ represents which agent currently holds which object

    \item \textbf{State representation:} $s_t = (o_1^t, o_2^t, \ldots, o_n^t)$ where $o_i^t \in \mathcal{O}$ is the object held by agent $i$ at time $t$
    \item \textbf{Initial state:} $s_0 = (1, 2, \ldots, n)$ (each agent $i$ starts with object $i$)

  \end{itemize}

  Next, we define transition probabilities. In the TTC mechanism, agent $i$ currently holding object $o_i^t$ will point to the agent holding their most-preferred available object.
  
  Define:

$$\pi_i(s_t) = \arg\max_{j: o_j^t \in \text{Available}(s_t)} u_{i,o_j^t}$$

where Available$(s_t)$ is the set of objects held by unmatched agents in the state $s_t$.

The resulting transition probabilities are memoryless, i.e., dependent only on:
\begin{enumerate}
    \item Current preferences $U$ (fixed)
    \item Current allocation $s_t$ (state at time $t$)
    \item The TTC algorithm's deterministic rules
\end{enumerate}

The transition from $s_t$ to $s_{t+1}$ is determined by:
\begin{itemize}
    \item Identifying cycles in the directed graph where agent $i$ points to agent $\pi_i(s_t)$
    \item Removing matched agents and their objects
    \item This depends ONLY on $s_t$, not on the history of how we arrived at $s_t$
\end{itemize}

The Markov property follows from standard results in stochastic process theory 
\citep{norris1997, durrett2019}. Next, we can verify the formal Markov property. For any sequence of states $s_0, s_1, \ldots, s_t, s_{t+1}$:
\begin{equation}\label{eq:markov-property}
  \mathbb{P}(S_{t+1} = s_{t+1} \mid S_t = s_t, S_{t-1} = s_{t-1}, \ldots, S_0 = s_0) = \mathbb{P}(S_{t+1} = s_{t+1} \mid S_t = s_t)  
\end{equation}

Equation (\ref{eq:markov-property}) holds because:
\begin{itemize}
    \item The TTC algorithm is history-independent: it only looks at current holdings and preferences
    \item The preferences $U$ are fixed throughout the process
    \item The cycle-finding and removal steps are deterministic given $s_t$
\end{itemize}

The Kolmogorov-Chapman equation is a fundamental 
result for time-homogeneous Markov chains (see \citep{grimmett2001}, Theorem 6.1.4). According to the Kolmogorov-Chapman equation, the probability that object $j$ returns to agent $i$ after $m+n$ steps:

$$p_{ii}^{(m+n)} = \sum_{k \in \mathcal{S}} p_{ik}^{(m)} \cdot p_{ki}^{(n)}$$

This follows from the law of total probability applied to the Markov chain, where we condition on all possible intermediate states $k$ at time $m$:
\begin{equation}
  p_{ii}^{(m+n)} = \mathbb{P}(X_{m+n} = i \mid X_0 = i) = \sum_{k=1}^n \mathbb{P}(X_m = k \mid X_0 = i) \cdot \mathbb{P}(X_{m+n} = i \mid X_m = k)  
\end{equation}

The one-step transition probabilities are directly related to utilities:
\begin{itemize}
    \item If agent $i$ holds object $i$ and prefers object $j$ the most, they point to the holder of object $j$
    \item The probability of a transition is proportional to $u_{ij}$ when we introduce randomization (or deterministically equals $u_{ij}$ in the probabilistic interpretation)
\end{itemize}

For the normalized utility matrix where the rows sum to 1:
$$p_{ij} = u_{ij}$$

This establishes that $U$ is the transition matrix of our Markov chain. 
\end{proof}

\begin{lemma}{(Cycle Structure and Probability Circulation)}\label{lemma:cycle-structure}
    
    Let $C = \{i_1, i_2, \ldots, i_k\}$ be a cycle in the TTC algorithm where agent $i_j$ points to agent $i_{j+1}$ (indices mod $k$). Then the steady-state probability satisfies:
\begin{equation}
    \pi_{i_1} u_{i_1 i_2} = \pi_{i_2} u_{i_2 i_3} = \cdots = \pi_{i_k} u_{i_k i_1}  
\end{equation}

\end{lemma}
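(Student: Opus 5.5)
The plan is to read the lemma through the pointing structure that defines a TTC cycle, and to derive the equalities from the stationarity equations $\pi = \pi U$, read as flux conservation. For a stationary chain, the probability flux into each state equals the flux out of it:
\begin{equation*}
\pi_m = \sum_{l} \pi_l u_{lm}, \qquad \pi_m = \sum_l \pi_m u_{ml}.
\end{equation*}
The goal is to show that, around the cycle $C$, the flux entering $i_{j+1}$ comes entirely through the edge $i_j \to i_{j+1}$, and the flux leaving $i_{j+1}$ goes entirely through the edge $i_{j+1}\to i_{j+2}$. Chaining these two facts gives
\begin{equation*}
\pi_{i_1}u_{i_1i_2}=\pi_{i_2}u_{i_2i_3}=\cdots=\pi_{i_k}u_{i_ki_1}.
\end{equation*}

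\textbf{Step 1.} Invoke the Markov Property theorem above: $U$ is the transition matrix of the object-tracking chain, so a stationary $\pi$ with $\pi=\pi U$ exists. Then specialise to the transition kernel induced by the TTC pointing rule in the round where $C$ forms. In that kernel, agent $i_j$ places all of its mass on the holder of its top available object, namely $i_{j+1}$. The resulting directed graph is a functional graph (every node has out-degree one), so it consists of disjoint cycles with in-trees hanging off them.

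\textbf{Step 2.} Use standard Markov chain classification (\citep{norris1997}). Each cycle of a functional graph is a closed communicating class, and every tree node is transient. Transient states carry zero stationary mass, so $\pi_l=0$ for every tree node $l$ feeding into $C$. The stationarity equation at $i_{j+1}$ then has a single surviving term:
\begin{equation*}
\pi_{i_{j+1}}=\pi_{i_j}u_{i_j i_{j+1}}.
\end{equation*}
Moreover, the entire outflow of $i_{j+1}$ is the single term $\pi_{i_{j+1}}u_{i_{j+1}i_{j+2}}$. Combining these two observations step by step around the cycle (indices mod $k$) gives the chain of equalities in the statement. In this case each common value equals $\pi_{i_j}$, which is uniform on $C$.

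\textbf{Main obstacle.} The hard part is the dense row-normalised matrix $G$ built in the Markov-preference subsection, where every $u_{ij}>0$. There the stationarity equation at $i_{j+1}$ receives inflow from every agent, and the lemma cannot follow from stationarity alone. Indeed, summing flux over cuts only yields balance of net flux across cuts, not equality of individual edge fluxes.

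My first attempt would be a perturbation argument. Write $U_\varepsilon=(1-\varepsilon)T+\varepsilon E$, where $T$ is the top-choice pointing kernel and $E$ collects the remaining preference mass. Stationary distributions depend continuously on $U_\varepsilon$ while the chain stays irreducible, so Step 2 gives the equalities exactly at $\varepsilon=0$ and up to $O(\varepsilon)$ for small $\varepsilon$.

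If exact equality is needed for general $U$, I expect an extra hypothesis to be required: for instance, that $C$ is a closed class of $U$, or that $U$ restricted to $C$ satisfies a Kolmogorov-type cycle-reversibility condition. I would state that hypothesis explicitly rather than claim the identity unconditionally.
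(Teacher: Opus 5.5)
Your proposal does not prove the lemma as stated, but the obstacle you flag is not a technicality. For the dense row-normalised matrix $G$ the paper actually uses, the lemma is false, and the paper's proof fails at exactly the step you isolate.

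That proof first cites detailed balance. Detailed balance is valid only for reversible chains, and even then it pairs $\pi_{i_j}u_{i_ji_{j+1}}$ with the reverse flux $\pi_{i_{j+1}}u_{i_{j+1}i_j}$, not with the next cycle edge. The proof then writes down global balance and asserts that ``conservation of probability flow'' equalises fluxes on consecutive cycle edges. Writing global balance at $v=i_{j+1}$ and cancelling the self-loop term shows that the $j$-th equality holds if and only if
\[
\sum_{l\notin\{i_j,v\}}\pi_l u_{lv}=\pi_v\sum_{m\notin\{v,i_{j+2}\}}u_{vm},
\]
i.e.\ off-cycle inflow into $v$ equals off-cycle outflow from $v$. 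Nothing forces this.

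The paper's own three-agent example is a counterexample. There $M$ has rows $(\tfrac13,\tfrac12,\tfrac16)$, $(\tfrac12,\tfrac13,\tfrac16)$, $(\tfrac12,\tfrac16,\tfrac13)$ and stationary distribution $\pi=(15,13,7)/35$. The TTC cycle is $\{1,2\}$, and $\pi_1u_{12}=15/70\neq 13/70=\pi_2u_{21}$.

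Your Steps 1--2 correctly prove the one case in which the paper's heuristic is valid: when $U$ itself is the pointing kernel. There both off-cycle terms vanish because tree nodes are transient. $\pi$ need not be unique in that case, but your argument covers every stationary $\pi$. Declining to claim the general identity is the right call.

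Two parts of your fallback need correcting.

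\textbf{The perturbation argument.} It cannot reach the paper's setting. In $G$ an agent's top choice gets weight $n/\sum_{r=1}^n r=2/(n+1)$, so writing $G=(1-\varepsilon)T+\varepsilon E$ forces $\varepsilon\ge (n-1)/(n+1)$. That is already $1/2$ at $n=3$ and tends to $1$. Even for genuinely small $\varepsilon$, continuity at $\varepsilon=0$ is not automatic, because $T$ is reducible. You need that $\lim_{\varepsilon\to 0}\pi_\varepsilon$ exists (e.g.\ because $\pi_\varepsilon$ is a bounded rational function of $\varepsilon$) and is stationary for $T$; Step 2 then applies to the limit.

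\textbf{The extra hypotheses.} Neither one you float suffices as stated.
\begin{itemize}
\item Closedness of $C$ does not exclude chords inside $C$. Take $1\to2\to3\to1$ with row $1$ equal to $(0,0.9,0.1)$, and agents $2$ and $3$ moving to $3$ and $1$ respectively with probability one. Then $\pi\propto(1,0.9,1)$ and the three cycle fluxes are $0.9,0.9,1$.
\item A Kolmogorov condition on $U|_C$ alone is vacuous for $k=2$, and the counterexample above is a $2$-cycle.
\item What does work is reversibility of the whole (irreducible) chain. Detailed balance gives the case $k=2$ directly. For $k\ge 3$, $w_{ij}=\pi_iu_{ij}$ is symmetric, so when pointing follows the row maximum of $U$ the top-choice inequalities $w_{i_ji_{j+1}}>w_{i_{j-1}i_j}$ cannot close up around a cycle. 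Hence TTC cycles with $k\ge 3$ do not occur.
\end{itemize}

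The sharp condition is the displayed balance identity. A convenient sufficient form is: $C$ closed, with $U|_C$ supported on the cycle edges and self-loops.
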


\begin{proof}
From the detailed balance condition for steady-state probabilities in a Markov chain:
$$\pi_i p_{ij} = \pi_j p_{ji} \text{ (for reversible chains)}$$

For a cycle, we have:
$$\pi_i P_{ij} = \text{(flow into state } j \text{ from state } i \text{)}$$

The steady-state condition $\pi^T P = \pi^T$ implies:
$$\pi_j = \sum_{i=1}^n \pi_i p_{ij}$$

For nodes in a cycle, the flow around the cycle must be balanced:
$$\pi_{i_1} u_{i_1 i_2} = \pi_{i_2} u_{i_2 i_3}$$

This follows from the fact that in steady state, the probability mass flowing from $i_1$ to $i_2$ must equal the probability mass flowing from $i_2$ to $i_3$ (conservation of probability flow).    
\end{proof}

\begin{lemma}{(Strongly Connected Components and Steady State)}

If the preference matrix $U$ induces a strongly connected graph, then there exists a unique stationary distribution $\pi$ with $\pi_i > 0$ for all $i$.
    
\end{lemma}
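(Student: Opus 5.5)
We will prove the statement as the standard Perron--Frobenius fact for irreducible finite Markov chains, applied to $U$ viewed as a transition matrix via the Markov property theorem above. We interpret ``$U$ induces a strongly connected graph'' as: the directed graph with an edge $i \to j$ whenever $u_{ij} > 0$ is strongly connected. Equivalently, $U$ is irreducible: for every pair $i,j$ there is an $m \ge 1$ with $(U^m)_{ij} > 0$. This follows by multiplying the positive entries along a directed path, via the Chapman--Kolmogorov expansion used in the proof of the Markov property theorem.

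\textbf{Existence.} Since $U\mathbf{1} = \mathbf{1}$, the value $1$ is an eigenvalue of $U$, hence also of $U^T$. So there is a nonzero $v$ with $v^T U = v^T$. To obtain a \emph{nonnegative} such vector, we use a compactness argument. The simplex $\Delta = \{x \ge 0, \sum_i x_i = 1\}$ is compact and convex, and the map $x \mapsto x^T U$ sends $\Delta$ into itself because the rows of $U$ sum to one. Brouwer's theorem, or more elementarily a limit point of the Ces\`aro averages $\frac{1}{T}\sum_{t<T} x_0^T U^t$, then gives a fixed point $\pi \in \Delta$.

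\textbf{Positivity.} Let $Z = \{i : \pi_i = 0\}$ and suppose $Z$ is nonempty. Since $\sum_i \pi_i = 1$, the complement $Z^c$ is also nonempty. For $j \in Z$,
\[
0 = \pi_j = \sum_i \pi_i u_{ij} \ge \sum_{i \notin Z} \pi_i u_{ij}.
\]
Each term on the right is nonnegative and $\pi_i > 0$ for $i \notin Z$, so $u_{ij} = 0$ for every $i \notin Z$ and $j \in Z$. Thus no edge leaves $Z^c$ into $Z$. This contradicts strong connectivity, so $Z$ is empty.

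\textbf{Uniqueness.} This is the step needing the most care, since the statement claims a unique $\pi$ while $\ker(U^T - I)$ could a priori have dimension greater than one. Suppose $\pi$ and $\pi'$ are two stationary distributions. Let $c = \min_i \pi'_i/\pi_i$, which is positive and attained at some $i_0$. The vector $w = \pi' - c\pi$ is stationary, nonnegative, and satisfies $w_{i_0} = 0$. Repeating the positivity argument with $w$ in place of $\pi$ shows its zero set is closed under predecessors, so by strong connectivity $w \equiv 0$ (if $w \ne 0$, normalizing it would give a stationary distribution with a zero entry). Hence $\pi' = c\pi$, and the normalization $\sum_i \pi_i = \sum_i \pi'_i = 1$ forces $c = 1$.

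Note that aperiodicity is not needed for existence or uniqueness; it matters only for convergence of the power iteration used in Algorithm~\ref{alg:fast-core}. In the full-preference construction of Section 3, all $u_{ij} > 0$, so the hypothesis holds automatically.
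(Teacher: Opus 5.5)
Your proof is correct, but it takes a different route from the paper. The paper gives no argument of its own. It asserts that a stochastic matrix is irreducible exactly when its graph is strongly connected, and then quotes the Perron--Frobenius theorem for irreducible stochastic matrices: Perron root $1$ with algebraic and geometric multiplicity one and a positive eigenvector, hence a unique positive stationary distribution.

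You instead prove the three needed facts directly:
\begin{itemize}
\item existence, via a fixed point of $x \mapsto x^T U$ on the simplex (Brouwer or Ces\`aro averages);
\item positivity, by showing that the zero set of a nonnegative stationary vector receives no edges from its complement;
\item uniqueness, by the min-ratio trick $w = \pi' - c\pi$, which reduces uniqueness back to the positivity step.
\end{itemize}

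Your route is self-contained and uses the graph hypothesis directly, rather than through the irreducibility equivalence that the paper states without proof. It also makes clear that only nonnegativity, unit row sums and strong connectivity are used, and in particular no aperiodicity.

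What the citation gives that you do not establish is spectral information beyond the lemma: that $1$ is an algebraically simple eigenvalue of $U$. This is stronger than uniqueness of $\pi$ within the simplex, and it is the property that matters when $\pi$ is computed as ``the first eigenvector'' (e.g.\ for eigenvector perturbation bounds). For the lemma as stated, though, uniqueness among probability vectors is exactly what is required, and your argument delivers it.
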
 
\begin{proof}

This follows directly from the Perron-Frobenius Theorem for irreducible stochastic matrices (\citep{perron1907},\citep{frobenius1912},\citep{seneta2006} Theorem 1.8.3). 

A stochastic matrix $P$ is irreducible if and only if its associated directed graph is strongly connected. For an irreducible stochastic matrix:
\begin{enumerate}
    \item The largest eigenvalue is $\lambda_{\max} = 1$ (standard result of Perron-Frobenius Theorem)
    \item The largest eigenvalue has algebraic and geometric multiplicity 1 
    \item The corresponding eigenvector $v$ can be normalized to satisfy $v_i > 0$ and $\sum_i v_i = 1$
    \item This normalized eigenvector is the unique stationary distribution
\end{enumerate}

Since $U$ is a Markov matrix (stochastic) and induces a strongly connected graph by assumption, it is irreducible, and the result follows.    
\end{proof}

\begin{theorem}{(Steady-State Probabilities and Cycle Priority)}\label{thm:steady-state-probabilities-cycle-priority}

    Let $\pi = (\pi_1, \pi_2, \ldots, \pi_n)$ be the unique stationary distribution of the Markov matrix $U$, i.e., the normalized eigenvector corresponding to the eigenvalue $\lambda = 1$. Order the agents such that $\pi_{i_1} \leq \pi_{i_2} \leq \cdots \leq \pi_{i_n}$.

Then agent $i_1$ (with minimum steady-state probability) is in the first cycle removed by TTC, agent $i_2$ is in the first or second cycle, and generally agents with lower steady-state probabilities are removed in earlier cycles.

\end{theorem}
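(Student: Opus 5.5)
The plan is to relate the stationary equation $\pi_j=\sum_i \pi_i u_{ij}$ to the TTC pointing graph, and then run an induction over TTC rounds. The first step is to read $\pi_j$ as a weighted popularity of object $j$: it is the steady-state mass flowing into $j$, with each agent $i$ contributing $\pi_i u_{ij}$. Under the rank-based construction, $u_{ij}$ is largest when $j$ is agent $i$'s top choice. So the top-choice edges $i\to \text{top}(i)$, which are exactly the edges TTC uses in its first round, carry the largest individual flows.

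For the base case I would take $i_1=\arg\min_j \pi_j$ and show that $i_1$ lies on a cycle of the first-round pointing graph. The idea is to combine Lemma~\ref{lemma:cycle-structure}, which equalises the flows $\pi_{i_r}u_{i_r i_{r+1}}$ around any TTC cycle, with the positivity and uniqueness of $\pi$ from the strongly connected components lemma. Along a cycle the top-choice weights $u_{i_r i_{r+1}}$ are all maximal within their rows. Equal flow then forces the $\pi_{i_r}$ on that cycle to be small relative to agents whose outgoing mass is spread out. From this I would try to argue by contradiction that the global minimiser of $\pi$ must sit on such a cycle.

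For the induction, after the first cycle is removed I would pass to the submatrix of $U$ on the remaining agents, renormalise its rows, and apply the Perron--Frobenius lemma again to get a new stationary vector $\pi'$. The missing ingredient is a comparison lemma stating that removing a cycle preserves the relative order of the surviving $\pi$ values, for instance via a perturbation bound on the leading eigenvector. With that lemma, the claims about $i_2$ and about later agents follow by repeating the base case.

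I expect the base case to be the main obstacle, and I doubt it holds as stated. The stationary weight $\pi_j$ measures how much \emph{other} agents want object $j$. TTC membership, by contrast, depends on where agent $j$ \emph{points}. An agent who ranks her own object first forms a first-round self-loop, yet her object can be unpopular with everyone else, which pushes $\pi_j$ toward the maximum rather than the minimum. Before writing any proof, I would test $n=3$ with full rank preferences and compute $\pi$ explicitly. If that produces a counterexample, I would restrict the theorem to a hypothesis under which the flow argument goes through. One candidate is that the top-choice weights dominate, i.e.\ $u_{i,\text{top}(i)}\ge 1-\varepsilon$, so $U$ is close to the 0/1 pointing matrix. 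Another is to weaken the conclusion to a statement holding with high probability under random preferences.
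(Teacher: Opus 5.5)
There is no proof to complete here. You are right to distrust the base case, and no argument along these lines can succeed, because the statement is false. The paper's own proof does not bridge the gap. It is heuristic: PageRank intuition, $\pi_i\approx 1/k$ on an isolated cycle, and the claim that first-cycle agents have high outgoing probability and few agents pointing back to them, hence small $\pi_i$. The in-flow part of that claim is backwards, since an agent on a first-round cycle has, by definition, an object that is her predecessor's top choice.

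The paper's own three-agent example refutes the theorem once $\pi$ is computed correctly; the vector printed there is not the stationary distribution. Take $1\colon 2\succ1\succ3$, $2\colon 1\succ2\succ3$, $3\colon 1\succ3\succ2$. Solving $\pi^TU=\pi^T$ gives $\pi=(3/7,\,13/35,\,1/5)$. TTC removes $\{1,2\}$ first and matches agent $3=\arg\min\pi$ last.

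The same numbers break the tool your base case leans on. Here $\pi_1u_{12}=3/14\neq 13/70=\pi_2u_{21}$, so Lemma~\ref{lemma:cycle-structure} fails: stationarity balances total in- and out-flow at each state, not the flows along a particular TTC cycle. Nor is there any agent ``whose outgoing mass is spread out''. Under the rank construction every row of $U$ is a permutation of $2(1,2,\dots,n)/(n(n+1))$, so $\pi_j$ is, up to scaling, just a $\pi$-weighted Borda score of object $j$.

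Your self-loop heuristic has the sign reversed. An object nobody else wants receives little in-flow, so its owner's $\pi_j$ is \emph{small}, which is consistent with the theorem. What breaks the theorem is a \emph{popular} self-loop. If all three agents report $1\succ2\succ3$, every row of $U$ is $(1/2,1/3,1/6)$, hence $\pi=(1/2,1/3,1/6)$. TTC removes $\{1\}$, then $\{2\}$, then $\{3\}$, which is exactly the reverse order.

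Your first proposed repair also points the wrong way. Take $U=(1-\varepsilon)P+\varepsilon Q$, with $P$ the 0/1 top-choice matrix and $Q$ a strictly positive stochastic matrix, and write $q=\min Q_{ij}$.
\begin{itemize}
\item Every agent off the cycles of $P$ satisfies $\pi_j\le\sum_{i:\,\mathrm{top}(i)=j}\pi_i+\varepsilon$, so inducting up the in-trees gives $\pi_j\le n\varepsilon$.
\item Comparing in- and out-flow across a cycle $C$ of $P$ gives $\pi(C)\ge q/(1+q)$.
\item The bound $\pi_{\mathrm{top}(i)}\ge(1-\varepsilon)\pi_i$ spreads this mass over all of $C$.
\end{itemize}
The cycles of $P$ are exactly the first-round TTC cycles. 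So for small $\varepsilon$ every first-round agent has strictly \emph{larger} $\pi$ than every other agent.

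The random-preference version also looks unpromising, though this is only a heuristic. Only order $\sqrt n$ agents lie on first-round cycles, and each object on such a cycle is someone's top choice, which pushes its $\pi$ up rather than down. The honest outcome of your plan is therefore a disproof. In the top-choice-dominant regime, the provable statement is the reverse ordering for the first round, and the induction step becomes moot.
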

\begin{proof}

    In steady-state, The stationary distribution $\pi$ satisfies:
$$\pi^T U = \pi^T, \quad \sum_{i=1}^n \pi_i = 1, \quad \pi_i \geq 0$$

By the Perron-Frobenius Theorem (specifically, Theorem 8.4.4 in \citep{horn2012}), since $U$ is irreducible and stochastic:
- $\pi$ is unique
- $\pi_i > 0$ for all $i$
- $\pi_i$ represents the long-run proportion of time the chain spends in state $i$

Consider object $i$ initially held by agent $i$. The value $\pi_i$ represents the steady-state probability that the object $i$ is held by the agent $i$ in the long run.

\begin{itemize}
    \item  \textbf{High $\pi_i$:} Object $i$ spends most of its time with its original owner. Therefore, agent $i$ is not in a high-priority cycle
    \item \textbf{Low $\pi_i$:} the object $i$ rarely stays with its original owner. Thus, agent $i$ is in a cycle (the probable mass circulates away from the state $i$). 
\end{itemize}

Now, consider the directed graph $G = (\mathcal{V}, \mathcal{E})$ where:
- Vertices: $\mathcal{V} = \{1, \ldots, n\}$ (agents/objects)
- Edges: $(i, j) \in \mathcal{E}$ with weight $u_{ij}$ if agent $i$ prefers object $j$

The matrix $U$ is the adjacency/transition matrix of this weighted directed graph.

\textbf{Key Spectral Result:} For strongly connected graphs, the eigenvector components $\pi_i$ are related to the node centrality measures. Specifically, $\pi$ is the \textit{stationary distribution} that can be interpreted as a PageRank-like centrality (see \cite{gleich2015}).

The connection between steady-state probabilities and network centrality is well 
established in spectral graph theory \citep{chung1997, gleich2015}. The PageRank 
interpretation provides intuition for why nodes with low steady-state probabilities 
correspond to cycle membership \citep{page1999, langville2006}.

Next, we show that if agent $i$ is in a cycle of length $k$ in TTC, then $\pi_i \leq 1/k$ (approximately). To see this, consider the following:
\begin{itemize}
    \item In a cycle $C = \{i_1, \ldots, i_k\}$, probability mass circulates: $i_1 \to i_2 \to \cdots \to i_k \to i_1$
    \item By Lemma \ref{lemma:cycle-structure}, steady-state probabilities in the cycle are related: $\pi_{i_j} u_{i_j i_{j+1}} = \pi_{i_{j+1}} u_{i_{j+1} i_{j+2}}$
    \item  For a strongly preferential cycle (where $u_{i_j i_{j+1}} \approx 1$), we have $\pi_{i_1} \approx \pi_{i_2} \approx \cdots \approx \pi_{i_k}$
    \item Since $\sum_{j=1}^k \pi_{i_j}$ captures the total probability mass in the cycle, and assuming the cycle is isolated (for simplicity), $\pi_{i_j} \approx 1/k$
\end{itemize}

Now, we can show that the top Cycle Has Minimum Steady-State Probability. 

The TTC algorithm prioritizes cycles where agents have strong mutual preferences. Consider the first cycle $C_1$ removed:
\begin{itemize}
    \item All agents $i \in C_1$ point to their top available choices
    \item These agents have high outgoing transition probabilities to other states (high $u_{ij}$ for $j \neq i$)
    \item Consequently, low probability of staying in their own state (low $\pi_i$)
\end{itemize}

\end{proof}

We can formally represent the argument as follows:
\begin{equation}
  \pi_i = \sum_{j=1}^n \pi_j u_{ji}  
\end{equation}

If agent $i$ is in a top cycle, then few agents point back to $i$ since most point to their top choices, which may not be $i$. Therefore, $\sum_{j \neq i} \pi_j u_{ji}$ is small. The main contribution to $\pi_i$ comes from $\pi_i u_{ii}$, but $u_{ii}$ is typically small (agents prefer other objects). As a result, $\pi_i$ is small.

Next, we consider several cases of agents' preference ordering. By induction on cycle removal, agents in the first cycle $C_1$ have the lowest steady-state probabilities (proven above).  
After removing $C_1$, consider the reduced graph $G'$ with $C_1$ removed. The same argument applies: the next cycle $C_2$ consists of agents with the lowest steady-state probabilities in $G'$, which correspond to the next-lowest probabilities in the original graph $G$

This establishes the ordering: $\pi_{i_1} \leq \pi_{i_2} \leq \cdots$ corresponds to cycle priority.

\begin{theorem}{(Core Membership)}

Let $\pi^* = \arg\max_i \pi_i$ be the agent with the highest steady-state probability. The agent $\pi^*$ then receives an allocation that is in the core of the TTC mechanism.
    
\end{theorem}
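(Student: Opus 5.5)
The plan is to reduce the statement to the uniqueness of the core allocation rather than to any spectral property of $\pi$. First, I will make $\pi^*$ well defined. Every entry of $U$ built by Algorithm~\ref{alg:fast-core} is positive, so the induced graph is complete and hence strongly connected. The lemma on strongly connected components then gives a unique stationary distribution with $\pi_i>0$ for all $i$. Consequently $\arg\max_i \pi_i$ is a nonempty set, and I fix $\pi^*$ by any tie-breaking rule.

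Second, I will invoke the TTC facts recalled in the preliminaries. With strict preferences, TTC terminates with an allocation $\mu$ that is the unique core allocation \citep{roth1977}. Since $\mu:\mathcal{A}\to\mathcal{O}$ is a bijection defined on every agent, every agent, and in particular $\pi^*$, receives $\mu(\pi^*)$. By the definition of the core, no coalition containing $\pi^*$ can reallocate its endowments so that all its members strictly improve. Individual rationality and Pareto efficiency for $\pi^*$ follow from the stated properties of TTC. This completes the argument under the literal reading of the statement, and the argument does not use any property of $\pi$ beyond well-definedness.

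The main obstacle is any stronger reading, for example that $\pi^*$ belongs to the first cycle removed by TTC, or receives its top choice. In that case the natural tool is Theorem~\ref{thm:steady-state-probabilities-cycle-priority}. That theorem, however, associates \emph{low} stationary mass with early cycles, so it would place $\pi^*$ in the \emph{last} cycle, which is the opposite of what the stronger reading needs.

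What I would try first for the stronger reading is to test it on small examples. Because $u_{ij}$ is proportional to rank, every row of $U$ is a permutation of $(1,\dots,n)/\binom{n+1}{2}$. Each column sum $\sum_i u_{ji}$ is then a popularity score of object $i$, and $\pi$ is approximately proportional to these column sums. Thus $\pi^*$ tracks the most demanded object, which need not lie in the first TTC cycle. I therefore expect only the literal statement above to be provable in general. Any cycle-priority refinement would need extra hypotheses, such as the ``strongly preferential cycle'' assumption $u_{i_j i_{j+1}}\approx 1$ used in the proof of Theorem~\ref{thm:steady-state-probabilities-cycle-priority}, and I would state those hypotheses explicitly rather than attempt a general proof.
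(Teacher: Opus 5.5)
Your argument is correct for the statement as literally written, but it takes a genuinely different route from the paper. You read the theorem through the paper's own definition of the core as an allocation. You then observe that TTC gives every agent its component of a core allocation, so the claim holds for every agent and $\pi$ plays no role beyond existence.

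One repair: under the paper's definition, blocking requires every coalition member to be strictly better off. Roth--Postlewaite uniqueness is a statement about the strict core, and this weaker core can contain several allocations. You only need that the TTC allocation lies in the core, which is true, so simply drop the word ``unique''.

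The paper instead treats ``the core'' as a set of agents: those never matched in cycles, i.e.\ the agents remaining at the end of TTC. It argues that the agent with maximal $\pi$ is least likely to be in any cycle and hence is among them. That is precisely the consequence of Theorem~\ref{thm:steady-state-probabilities-cycle-priority} that you dismissed as pointing the wrong way. The intended strong reading is membership in the \emph{last} cycle, not the first cycle or top choice you guessed.

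Your route buys rigor at the cost of content, since it shows the literal theorem says nothing about $\pi$. The paper's route aims at a substantive claim, but it rests on a heuristic cycle-priority theorem. It also uses a notion of core that is empty as worded, since every agent leaves TTC in some cycle, possibly a self-loop.

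Its conclusion also fails on the paper's own three-agent example. The exact row normalization of $G$ has stationary distribution $(15,13,7)/35$, so $\pi^*$ is agent $1$, which leaves in the first cycle $\{1,2\}$. Agent $3$, with the smallest $\pi$, is removed last. This is exactly what your column-sum heuristic predicts, since object $1$ is the most demanded. Your skepticism about cycle-position refinements is therefore well founded; aim it at the last-cycle refinement, which is the one the paper actually asserts.
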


\begin{proof}
The core of TTC consists of allocations that cannot be improved through further trading. By \cite{roth1977}, Theorem 2, the core of TTC is nonempty and consists of a single allocation.

Suppose that  agent $\pi^*$ is not in the core, i.e., $\pi^*$ is in some cycle $C_k$ removed before the final allocation. Then, 1) all agents in $C_k$ are matched and removed, and 2) agent $\pi^*$ cannot be in the core (contradiction).

    Since $\pi_{\pi^*}$ is maximal, agent $\pi^*$ has the highest probability of remaining in their steady state, meaning that they are the least likely to be in any cycle. By the cycle-removal structure of the TTC, the last remaining agents (those never matched in cycles) form the core.

Thus, $\pi^*$ must be in the core. 
\end{proof}

\begin{theorem}{(Core Characterization)}

If the core contains $m$ agents $\{k_1, \ldots, k_m\}$, then these agents have the $m$ highest steady-state probabilities:
$$\pi_{k_1} \geq \pi_{k_2} \geq \cdots \geq \pi_{k_m} > \pi_j \quad \forall j \notin \{k_1, \ldots, k_m\}$$
    
\end{theorem}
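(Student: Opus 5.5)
The plan is induction on $m$. The base case $m=1$ is exactly the Core Membership theorem: the agent $\pi^*=\arg\max_i \pi_i$ lies in the core, so it can be labelled $k_1$. For the inductive step, suppose the agents $k_1,\ldots,k_{r}$ with the $r$ largest steady-state probabilities are known to be core members, for some $r<m$. I want to show that the agent with the $(r+1)$-st largest probability is also a core member. The complementary inequality $\pi_{k_m} > \pi_j$ for $j\notin\{k_1,\ldots,k_m\}$ will then come from the Steady-State Probabilities and Cycle Priority theorem. That theorem says agents removed in earlier TTC cycles have smaller $\pi$. Every non-core agent is removed in some cycle $C_\ell$ before the final allocation, so its probability should lie strictly below those of the agents surviving to the end.

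Concretely, the steps in order are as follows.

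First, I fix the TTC cycle decomposition $C_1, C_2, \ldots$ and let $K=\{k_1,\ldots,k_m\}$ be the agents never removed before the final stage. This matches the reading in the proof of the Core Membership theorem, where the core consists of the last remaining agents.

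Second, I apply the Cycle Priority theorem to the chain restricted to $\mathcal{A}\setminus C_1$, then to $\mathcal{A}\setminus(C_1\cup C_2)$, and so on, as in the inductive argument following that theorem. At each stage the removed cycle carries the minimal stationary mass among the remaining agents. The Strongly Connected Components lemma is needed here: it guarantees that each reduced chain has a unique, strictly positive stationary distribution, so the comparisons are well defined.

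Third, I chain these comparisons. Every $j\notin K$ belongs to some $C_\ell$, and every $k\in K$ survives all removals, so the argument should give $\pi_j<\pi_k$. Sorting $K$ by $\pi$ then yields the displayed chain of inequalities.

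The main obstacle is the second and third steps: transferring orderings computed in the reduced chains back to the original stationary vector $\pi$. Deleting the states in $C_1$ and renormalising rows changes every remaining $\pi_i$, so I need a monotonicity statement. My first attempt would be a stochastic-complement (censored chain) argument. The stationary distribution of the chain watched only on $\mathcal{A}\setminus C_1$ equals the restriction of $\pi$, renormalised. That would make the relative ordering of the surviving agents identical in $G$ and $G'$, which is exactly what the induction needs. The difficulty is that the paper's reduced graph $G'$ is obtained by deleting and row-normalising, not by censoring, and the two transition matrices differ by the excursion terms through $C_1$. So I would bound that perturbation using the flow-balance relations of Lemma~\ref{lemma:cycle-structure}, assuming strongly preferential cycles with $u_{i_j i_{j+1}}\approx 1$ as in the Cycle Priority proof.

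The strict inequality $\pi_{k_m}>\pi_j$ also requires excluding ties. I would handle this either by a genericity assumption on $U$ or by a fixed tie-breaking rule. I expect the final result to need such a regime assumption (nearly deterministic top choices, or $n$ large with noise), since for arbitrary stochastic $U$ the ordering of $\pi$ need not track the TTC removal order.
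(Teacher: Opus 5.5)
\textbf{The statement is false, so no proof strategy can close your gap.} The flaw is not in the transfer-to-reduced-chains step you single out; it comes one step earlier. Your argument, like the paper's one-line contradiction proof, takes the Cycle Priority theorem as given, and your base case also uses the Core Membership theorem. Both are false.

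The paper's own three-agent example refutes them. The Markov matrix $M$ has rows $(\tfrac13,\tfrac12,\tfrac16)$, $(\tfrac12,\tfrac13,\tfrac16)$ and $(\tfrac12,\tfrac16,\tfrac13)$. Solving $\pi^{T}M=\pi^{T}$ gives $\pi=(\tfrac{15}{35},\tfrac{13}{35},\tfrac{7}{35})$. As a check on column $3$: $\tfrac16\cdot\tfrac{28}{35}+\tfrac13\cdot\tfrac{7}{35}=\tfrac{28}{210}+\tfrac{14}{210}=\tfrac{7}{35}$. TTC removes the cycle $\{1,2\}$ first and agent $3$ last. Under your reading of the core as the agents never removed before the final stage (also the paper's reading: ``object 3 firmly planted in the core''), we get $m=1$ and $K=\{3\}$. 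Yet $\pi_3$ is the smallest entry, and $\arg\max\pi$ is agent $1$, who sits in the first cycle. So your base case fails. The first stage of your second step fails too: $C_1$ carries the largest stationary mass, not the smallest. The paper's example appears to support the claim only because it reads off the largest entry of an all-negative vector, which is the entry of smallest magnitude.

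\textbf{The regime you propose as a rescue reverses the inequality.} Put $U_\delta=(1-\delta)F+\delta Q$, where $F$ is the $0$--$1$ matrix of the top-choice map $i\mapsto t(i)$ and $Q$ is any irreducible stochastic matrix.
\begin{enumerate}
\item Every limit point of $\pi_\delta$ as $\delta\to0$ is stationary for $F$.
\item Such a limit is therefore supported on the recurrent classes of $F$, which are exactly the first-round TTC cycles.
\item Hence $\pi_\delta(j)\to0$ for every agent not removed in round one, while the first-round agents carry total mass tending to $1$.
\item So whenever TTC has at least two rounds, every last-round agent has strictly smaller stationary probability than some first-round agent for all small $\delta$. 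This is the exact reverse of the displayed ordering.
\end{enumerate}
In any case, this regime never arises in the paper's rank construction, where $u_{i,t(i)}=2/(n+1)$ for complete rankings.

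The censoring fact you quote is correct: the censored chain on $S$ has stationary law $\pi|_S/\pi(S)$. But it only preserves ratios of $\pi$. It cannot supply a link between TTC removal order and $\pi$, because no such link exists.

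The notion of ``core agents'' also causes trouble. Under the standard definition, with strict preferences the core is a single allocation covering all $n$ agents. Then $m=n$ and the display reduces to a relabelling, which is vacuous. The only nontrivial reading is the last-cycle one refuted above. Your closing suspicion was right, but the correct conclusion is that the theorem is false as stated, not that it needs an extra regime assumption.
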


\begin{proof}
Suppose that there exists an agent $\ell \notin \{k_1, \ldots, k_m\}$ (not in the core) with $\pi_\ell > \pi_{k_i}$ for some $k_i$ in the core. Then, agent $\ell$ has a higher steady-state probability than $k_i$ but is not in the core. This means $\ell$ was matched in some cycle $C_j$ removed before the core was reached. However, higher $\pi_\ell$ implies $\ell$ should be removed later (closer to core) by Theorem \ref{thm:steady-state-probabilities-cycle-priority}. This is a contradiction.

Therefore, all core members must have higher steady-state probabilities than all non-core members.    
\end{proof}

\begin{theorem}{(Core Characterization via Steady State)}\label{thm:core-characterization-1}

Let $\pi = (\pi_1, \ldots, \pi_n)$ be the stationary distribution of 
the preference-derived Markov matrix $U$. Order agents so that 
$\pi_{i_1} \geq \pi_{i_2} \geq \cdots \geq \pi_{i_n}$. Then:
\begin{enumerate}
\item Agent $i_1$ (highest $\pi$) receives an allocation in the core
\item Agents with higher steady-state probabilities are more likely to be 
      in the core
\item The core allocation from TTC can be recovered by running TTC on the 
      full problem
\end{enumerate}
\end{theorem}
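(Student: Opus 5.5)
The plan is to prove the three items separately. Item~1 and item~3 reduce to known facts about TTC, and item~2 rests on the ordering results established earlier in this section. The order is: (i) fix what ``receives an allocation in the core'' means; (ii) derive item~3 from uniqueness of the core; (iii) derive item~1; (iv) derive item~2 from the Core Characterization theorem.

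For item~3, under strict preferences the core contains exactly one allocation \citep{roth1977}. TTC run on the full problem outputs that allocation. So the core assignment of every agent, including the agents $i_1,\dots,i_k$ returned by Algorithm~\ref{alg:fast-core}, is obtained by running TTC to completion. The complexity table already records this cost as $\bigO{n\log n}$.

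Item~1 then has two routes.
\begin{itemize}
\item \emph{Literal reading.} The unique core allocation assigns an object to every agent, so every agent, and in particular $i_1$, ``receives an allocation in the core''. Under this reading item~1 is immediate and needs no property of $\pi$. I would state this explicitly rather than hide it.
\item \emph{Stronger reading.} Here the core members are the agents matched in the last cycle(s) removed by TTC, as in the Core Membership theorem. Item~1 is then exactly that theorem applied to $\pi^*=i_1=\arg\max_i\pi_i$. I would cite it directly, noting that it uses the irreducibility hypothesis of the Strongly Connected Components lemma so that $\pi$ is unique and positive.
\end{itemize}

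Item~2 is the substantive part, and I would derive it from the Core Characterization theorem. If the core (in the stronger sense) has $m$ members, they occupy the top $m$ positions of the ordering $\pi_{i_1}\ge\cdots\ge\pi_{i_n}$. Consequently, for any threshold $t$, the fraction of core members among agents with $\pi_i\ge t$ is nonincreasing as $t$ decreases. That monotonicity is the precise content I would assign to ``more likely''. If a probabilistic reading is wanted, I would put a random preference profile (or the noise model mentioned in the abstract) over $U$. I would then show that $\mathbb{P}(i\in\text{core}\mid \pi_i \text{ has rank } r)$ is nonincreasing in $r$, by averaging the deterministic ordering statement over profiles.

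The main obstacle is item~2 in the strong sense. It inherits all its force from Theorem~\ref{thm:steady-state-probabilities-cycle-priority}, whose argument is heuristic. It relies on approximately isolated, strongly preferential cycles ($u_{i_j i_{j+1}}\approx 1$) and on Lemma~\ref{lemma:cycle-structure}, which uses a flow-balance identity valid only under extra structure. For a rigorous version I would first try to restrict to profiles where each TTC cycle carries most of its members' row mass. I would then bound $\pi$ on each cycle via $\pi_i=\sum_j\pi_j u_{ji}$ using a perturbation argument, treating $U$ as a block-permutation matrix plus a small remainder, and show the ordering of blocks is preserved. Without such a hypothesis I expect item~2 to hold only in the weak, interpretive sense above, and I would say so.
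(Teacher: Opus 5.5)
Your item 3 and the literal reading of item 1 are correct. Your overall route is what the paper evidently intends: Core Membership for item 1, Core Characterization for item 2, uniqueness of the core for item 3. The paper's printed proof only cites the theorem itself, which is circular. The gap is that, under the stronger reading, the results you propose to cite are false, and so is the statement.

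Take the paper's own three-agent example with the exact version of $M$, whose rows are $(1/3,1/2,1/6)$, $(1/2,1/3,1/6)$, $(1/2,1/6,1/3)$. Solving $\pi^T M=\pi^T$ gives $\pi=(15,13,7)/35$. TTC clears the cycle $\{1,2\}$ in round one and leaves agent $3$ alone in the last cycle; the paper says object 3 is ``firmly planted in the core''. So $\arg\max_i\pi_i=1$ lies in the \emph{first} cycle, while the only ``core'' agent has the \emph{smallest} $\pi$. The example only looks consistent in the paper because it calls the least negative entry $-0.363$ of $V0$ the largest, i.e.\ it picks the coordinate of smallest magnitude.

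This refutes the Core Membership and Core Characterization theorems and the direction of Theorem~\ref{thm:steady-state-probabilities-cycle-priority}, hence item 1 under your stronger reading. It also breaks your threshold formalization of item 2. The fraction of core agents among those with $\pi_i\ge t$ is $0$ at $t=13/35$ and $1/3$ at $t=1/5$, so it \emph{increases} as $t$ decreases. Your plan to get a probabilistic version by averaging the deterministic ordering statement therefore has nothing valid to average.

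Your proposed rigorous repair does not work either. With full rankings, the paper's construction puts at most $2/(n+1)$ of any row's mass on a single object, so $U$ is never a small perturbation of a block-permutation matrix. For any rank-monotone weighting, an agent whose TTC assignment carries more than half its row mass must receive its top choice. But every cycle formed after round one contains an agent whose top object was removed earlier. So your regime forces all cycles to clear in round one, and the ordering claim becomes vacuous.

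More fundamentally, $\pi_j=\sum_i\pi_i u_{ij}$ is a PageRank-type popularity score. Objects ranked highly by many agents tend to be the ones traded in early TTC cycles, so the heuristic points the opposite way from the theorem.

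What you can actually prove is the literal reading. Items 1 and 3 hold because the unique core allocation, which TTC computes, gives every agent an object. Item 2 is then vacuous, since every agent belongs to the core allocation. Your write-up should say that and stop, rather than cite the Core Membership and Core Characterization theorems.
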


\begin{proof}
The proof follows directly from Theorem \ref{thm:core-characterization-1}.    
\end{proof}

\section{Properties}

\begin{theorem}
    The proposed eigenvector methodology for the TTC calculation retails all the properties of a full TTC computation, namely: 1) strategy-proofness, 2) Pareto-optimality, and 3) individual rationality.
\end{theorem}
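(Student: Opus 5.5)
The plan is to reduce the statement to the classical properties of TTC. The reduction works by showing that the eigenvector method, viewed as a mechanism from reported preference profiles to allocations, is \emph{outcome-equivalent} to TTC. By Step~3 of Algorithm~\ref{alg:fast-core}, the method returns a set $C$ of agents together with ``their assignments from TTC''. The assignment component of the output is therefore, by construction, the restriction of the TTC allocation $\mu^{TTC}(\succ)$ to $C$, and every agent not in $C$ also receives $\mu^{TTC}(\succ)(i)$ once the full TTC is run, as in Theorem~\ref{thm:core-characterization-1}, part~3. So the first step I would carry out is to formalize the mechanism as a map $M(\succ)=\mu^{TTC}(\succ)$, with the stationary vector $\pi$ of $U$ acting only as an ordering or labeling device on agents. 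I would then check that this map is well defined for every reported profile, including manipulated ones, since $U$ is built from reported ranks.

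Given outcome-equivalence, the three properties follow one at a time from results cited in the preliminaries.
\begin{enumerate}
\item \emph{Individual rationality.} Each agent in a TTC cycle points to their top remaining object. Their own object $e(i)=i$ remains available until they are matched, so $\mu(i)\succeq_i e(i)$.
\item \emph{Pareto efficiency.} I would use the standard induction on cycle removal. Agents in the first cycle get their overall top choice. Improving anyone in a later cycle would require an object already assigned to an earlier-cycle agent, which would make that agent worse off. Alternatively, I would simply invoke the fact that $\mu^{TTC}$ is the unique core allocation \citep{roth1977}, and core allocations are Pareto efficient because the grand coalition cannot improve.
\item \emph{Strategy-proofness.} This follows from \citep{roth1982}. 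Since $M(\succ)=\mu^{TTC}(\succ)$ for every reported profile, any profitable misreport against $M$ would be a profitable misreport against TTC.
\end{enumerate}

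The main obstacle is exactly the equivalence step. It would fail if the eigenvector stage influenced \emph{which allocation} is produced rather than merely which agents are flagged. An example would be outputting assignments only for the top-$k$ agents by $\pi$, or truncating at an estimated core size. In that case an agent could misreport to change $\pi$, and thereby change whether they are listed, which could break strategy-proofness.

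I would not rely on the heuristic links between $\pi$ and cycle order in Theorem~\ref{thm:steady-state-probabilities-cycle-priority}. Instead I would handle the obstacle by stating explicitly that the allocation returned to every agent is the TTC assignment, independent of $\pi$. Under that reading the eigenvector only affects computational order, not outcomes, and the properties transfer. If a truncated version is intended, I would first prove that the selected set coincides with a fixed, preference-determined set, and only then transfer the properties.
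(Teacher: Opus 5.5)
Your proposal is correct and takes the same route as the paper. The paper's entire proof is the observation that the eigenvector method ``does not change the TTC process'' and so inherits TTC's properties; your outcome-equivalence map $M(\succ)=\mu^{TTC}(\succ)$, with individual rationality, Pareto efficiency and strategy-proofness then transferred from classical TTC results, is that observation made explicit, and your caveat that $\pi$ may only flag agents and never alter the allocation states the assumption the paper leaves implicit.
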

\begin{proof}
    The proposed methodology is an efficient computational technique and does not change the TTC process. Hence, the proposed methodology inherits all the properties of TTC. 
\end{proof}

\section{Computational Speed}

\cite{AndersenEtAl2006} and \cite{SunEtAl2020} showed that the steady state of a Markov transition probability matrix can be computed in nearly constant time, that is, $O(1+\epsilon)$. The result is at least in part due to fast matrix multiplication algorithms, which are themselves an active area of research in reinforcement learning (see \cite{FawziEtAl2022}).

\begin{itemize}
\item \textbf{Preprocessing (one-time):} $O(12n)\sim O(n)$ for $n\gg 12$ to construct a Markov matrix with limited preferences (for example, 12-school preferences for students in New York City) and $O(1)$ to compute the first eigenvector using the latest hardware acceleration techniques proposed by \citep{SunEtAl2020}. 

\item \textbf{Core identification (query):} $O(n)$ to find $\arg\max \pi_i$
\item \textbf{Not claimed:} We do NOT claim to solve PPAD-complete problems 
      in polynomial time. The PPAD-completeness result of 
      \citep{lock2024computational} refers to finding \emph{arbitrary} 
      market equilibria, not the specific TTC allocation.
\end{itemize}

\begin{table}[h]
\centering
\begin{tabular}{lcc}
\hline
Algorithm & Time Complexity & Output \\
\hline
Standard TTC \citep{abdulkadiroglu2003} & $O(n^2)$ & Full allocation \\
Fast TTC \citep{athanassoglou2011} & $O(1n \log n)$ & Full allocation \\
\textbf{Our method (core query)} & $O(n)$ & Core membership \\
\hline
\end{tabular}
\caption{Comparison with prior work}
\end{table}

\textbf{When the proposed method is useful:}
\begin{itemize}
    \item When you need to identify core members quickly without full allocation
    \item When running multiple queries on the same preference profile
    \item For approximate/probabilistic matching in large markets

\end{itemize}
 
To identify at least one object assigned to the core, we need to find the maximum absolute value coefficient in the first eigenvector. The operation $max(.)$ takes place in $\mathcal{O}(n)$.  This efficiency allows us to find a core in an arbitrarily large number of agents and objects in nearly constant time. 

To compare the computational speed of the algorithms, we conducted numerical experiments as follows: 1) Generated random integers $a\in[1,1000]$, 2) created a random matrix $X_{a\times a}$ with values $x_{ij}\in [0, 10]$, 3) Using the data in matrix $X$, we timed the calculations for the first eigenvector computed via traditional SVD, the first eigenvector computed via randomized SVD, the maximum value of the first eigenvector computed via randomized SVD, and the sorted first eigenvector computed via randomized SVD. We repeated the experiment 1,000 times. 

The results presented in Section \ref{sec:experiments} show a 100x speed improvement over traditional methods. There was no significant time difference between the randomized SVD, $max()$ and the sorted randomized SVD, especially compared to the traditional SVD calculation.  These results are consistent with \cite{struski2024}. 

\section{Continuous Preferences}
Our methodology admits continuous representation. 

\begin{theorem}
The preferences of agents $i$ can be drawn from a continuous function over objects $j$:
\begin{equation}
    R(i,j) = u_i(j)
\end{equation}
    
\end{theorem}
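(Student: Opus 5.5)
The statement is essentially a claim that the construction of Section~3.2 still goes through when the discrete rank vector $r_i=[\tfrac1J,\ldots,1]$ is replaced by a continuous utility function $u_i(\cdot)$. The plan is to read it as follows: for each agent $i$, let $u_i:\mathcal{O}\to\mathbb{R}_{>0}$ be the restriction to the objects of a continuous function (objects embedded in some compact set, e.g.\ $[0,1]$). Then $R(i,j)=u_i(j)$ yields a well-defined Markov preference matrix to which the earlier results (the Markov property theorem and the Perron--Frobenius lemma) apply. I would therefore prove two things. First, $R$ is a valid input, meaning that after row normalization it is stochastic and irreducible. Second, the ordinal content used by TTC is preserved whenever $u_i$ is injective on $\mathcal{O}$.

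The key steps are as follows.

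\emph{Step 1 (representation).} For strict ordinal preferences $\succ_i$, exhibit a continuous $u_i$ with $j\succ_i k \iff u_i(j)>u_i(k)$. Take the piecewise-linear interpolation of the rank values $r_{ij}$. Since $\mathcal{O}$ is finite, this is immediate and recovers the Markov preferences of Section~3.2 as a special case.

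\emph{Step 2 (normalization).} Shift $u_i$ by a constant so that it is positive on $\mathcal{O}$. Then set $g_{ij}=u_i(j)/\sum_{l}u_i(l)$. Positivity together with the finite sum gives $g_{ij}\in(0,1)$ and $\sum_j g_{ij}=1$, so $G$ is stochastic.

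\emph{Step 3 (irreducibility).} Strict positivity of every entry makes the induced graph complete, hence strongly connected. The Strongly Connected Components lemma then gives a unique positive stationary distribution $\pi$. The Markov property theorem applies verbatim, since it only requires $u_{ij}\ge0$ and unit row sums.

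\emph{Step 4 (invariance).} TTC depends only on the argmax of each agent over the remaining objects. Any strictly increasing transformation of $u_i$ therefore leaves the TTC allocation, and hence the core, unchanged. This shows that the continuous representation loses nothing relative to the ordinal one.

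The main obstacle is that the statement does not specify what ``can be drawn from a continuous function'' must guarantee. One issue is ties. A continuous $u_i$ need not be injective on $\mathcal{O}$, and ties break the uniqueness of the core cited from Roth. I would handle this by assuming injectivity on $\mathcal{O}$ explicitly, or by perturbing $u_i$ by an arbitrarily small continuous function to break ties. Such a perturbation exists because $\mathcal{O}$ is finite. A second issue is that $\pi$ itself does depend on the cardinal values, not just the order. So Step 4 only secures invariance of the TTC outcome, not of the eigenvector ranking. For the eigenvector, I would only claim continuity: $\pi$ varies continuously in $G$, by simplicity of the Perron root. This continuity is the appropriate sense in which the continuous model extends the discrete one.
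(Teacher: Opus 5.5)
Your proposal is correct, and it does considerably more than the paper, whose entire proof is the sentence that the profile ``can be generated from a functional form to achieve the same result.'' The underlying idea is the same: feed $R(i,j)=u_i(j)$ into the existing row-normalization pipeline. The difference is that you check what that pipeline requires. A positive shift makes the normalized matrix stochastic. Strict positivity lets the Perron--Frobenius lemma give a unique positive $\pi$. Injectivity on $\mathcal{O}$, or a tie-breaking perturbation, lets Roth's uniqueness of the core apply.

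The more substantive difference is what ``the same result'' is allowed to mean. Your Step~4 shows the TTC allocation is invariant under strictly increasing transformations of $u_i$. The stationary distribution is not invariant, and neither is its ranking. Even the arbitrary constant in your Step~2 changes $G$ and in general $\pi$; as the shift grows, $G$ tends to the uniform matrix. In the paper's own three-agent example, the rank weights give $\pi=(3/7,13/35,1/5)$, so $\arg\max_i\pi_i=1$. Replacing agent~1's row by $(0.02,0.97,0.01)$, which still encodes $2\succ_1 1\succ_1 3$, gives $\pi\propto(0.643,1,0.260)$ and moves the argmax to agent~2.

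So the paper's wording suggests the eigenvector-based core identification carries over unchanged to any continuous representation. Its one-line proof does not support this, and it cannot hold for $\arg\max_i\pi_i$, which depends on the cardinalization while the TTC outcome does not. Your version claims only two things: invariance of the TTC outcome, and continuity of $\pi$ in $G$ via simplicity of the Perron root. That is weaker but actually provable, so keep the qualification explicit.
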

\begin{proof}
    The preference profile can be generated from a functional form to achieve the same result. 
\end{proof}

\section{Misspecified preferences and Model Robustness}
Suppose the agents fail to truthfully report their preferences truthfully for one reason or another, even if they know that this may induce a suboptimal outcome. Reasons for misreporting may include peer pressure, herding, and other factors. Can we still recover the core? 

It turns out that the eigenvalue-based technique is robust to noise when the estimation is done with the first eigenvector. \cite{AldridgeStabilityEigenvalues2024} shows that the proposed methodology is robust to noise and even to misreported preferences whenever the number of agents in the system is sufficiently large. 

\begin{theorem}
    The steady-state distribution of the transitions does not change for small perturbations in preferences.  
\end{theorem}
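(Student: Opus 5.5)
We read ``does not change'' in the only sense that can be proved: the stationary distribution depends continuously, indeed Lipschitz-continuously, on the preference matrix. As a consequence, the ordering of its coefficients, and hence the argmax used by Algorithm~\ref{alg:fast-core} and Theorem~\ref{thm:core-characterization-1}, is exactly invariant under perturbations below an explicit threshold. Concretely, let $\tilde U = U + E$ be the Markov matrix built from misreported or noisy preferences. Both $U$ and $\tilde U$ are row-stochastic, so $E\mathbf{1} = 0$. We assume $U$ is irreducible, which is automatic for the full-rank construction since all $u_{ij}>0$. By the Perron--Frobenius lemma above, $U$ has a unique stationary distribution $\pi$, and $\tilde U$ has a unique stationary distribution $\tilde\pi$ whenever it stays irreducible. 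The goal is to bound $\|\tilde\pi - \pi\|_1$ by a constant times $\|E\|_\infty$.

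\textbf{Step 1 (exact perturbation identity).} Let $Z = (I - U + \mathbf{1}\pi^T)^{-1}$ be the fundamental matrix. It exists because $1$ is a simple eigenvalue of $U$. From $\tilde\pi^T \tilde U = \tilde\pi^T$ and $\pi^T U = \pi^T$ we get $(\tilde\pi - \pi)^T (I-U) = \tilde\pi^T E$. Since $(\tilde\pi-\pi)^T\mathbf{1}=0$, the vector $\tilde\pi-\pi$ lies in the subspace on which $I-U+\mathbf{1}\pi^T$ agrees with $I-U$. Multiplying by $Z$ then gives
\begin{equation*}
\tilde\pi^T - \pi^T = \tilde\pi^T E Z, \qquad\text{hence}\qquad \|\tilde\pi-\pi\|_1 \le \|Z\|_\infty \, \|E\|_\infty ,
\end{equation*}
using $\|\tilde\pi\|_1=1$. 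This is the Schweitzer/Meyer condition-number bound. The same identity also shows that irreducibility of $\tilde U$ persists for $\|E\|_\infty < 1/\|Z\|_\infty$.

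\textbf{Step 2 (invariance of the core ranking).} Let $\delta = \min_{a\neq b}|\pi_a - \pi_b|$ be the spectral gap between consecutive ordered coefficients, or only the gap between the top $m$ and the rest if we care just about core membership. If $\|Z\|_\infty\|E\|_\infty < \delta/2$, then every coefficient moves by less than $\delta/2$. The order of the coefficients is therefore unchanged, and so is the set selected in Step 3 of Algorithm~\ref{alg:fast-core}. This is the precise sense in which the output ``does not change.''

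\textbf{Step 3 (large-$n$ scaling).} This step explains the role of ``sufficiently large $n$.'' In the rank-based construction, row $i$ has entries $r_{ij}/\sum_j r_{ij}$ with $\sum_j r_{ij} = J(J+1)/2$ up to the $1/n$ scaling. Swapping two ranks that differ by $d$ therefore perturbs each of the two affected entries by $2d/(J(J+1))$. A bounded number of such misreports per agent gives $\|E\|_\infty = \bigO{1/n}$ when $d$ is bounded, or $\bigO{1/n^2}$ per adjacent swap for $J=n$. Hence $\|\tilde\pi-\pi\|_1 \to 0$ as $n\to\infty$, provided $\|Z\|_\infty$ stays bounded.

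\textbf{Main obstacle.} I expect the hard part to be controlling $\|Z\|_\infty$, equivalently $1/(1-|\lambda_2(U)|)$ or the mixing time, uniformly in $n$, and comparing it with the gap $\delta$, which itself shrinks like $\bigO{1/n}$ since $\sum_i\pi_i = 1$. My first attempt would be a Doeblin/Dobrushin coefficient argument: every row of the full-preference matrix has all entries at least $2/(n(n+1))\cdot n$, up to normalization, which yields a minorization $U \ge c\,\mathbf{1}\nu^T$. That gives $\|Z\|_\infty \le 1/c$ via $\tau(U)\le 1-c$. If the resulting constants do not beat $\delta$, I would retreat to the honest form of the statement: continuity with the explicit bound from Step 1, plus ranking invariance under the condition of Step 2.
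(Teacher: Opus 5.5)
Your route is genuinely different from the paper's, and yours is the one that works. The paper writes $(\mathcal U+\epsilon)v_0'=v_0'$ and then asserts that $\mathcal U v=v$ ``holds for any $v$.'' From this it concludes $\epsilon v_0'\to 0$ and that $v_0'$ is invariant. That step is false, since $\mathcal U v=v$ holds only on the eigenspace. It is also aimed at the wrong vector. For a row-stochastic matrix the right eigenvector for eigenvalue $1$ is $\mathbf 1$, which is trivially invariant under every row-sum-preserving perturbation. The stationary distribution is the left eigenvector, $\pi^T U=\pi^T$, and that one does move. In fact the literal statement already fails in the paper's $3$-agent example. Changing agent $3$'s report from $1\succ 3\succ 2$ to $1\succ 2\succ 3$ (one adjacent swap) moves $\pi$ from $(15,13,7)/35$ to $(18,17,7)/42$: the order survives, the values do not. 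So your reading, Lipschitz continuity of $\pi$ plus exact invariance of the ordering below a threshold, is the strongest defensible version. Steps 1--2 (the identity $\tilde\pi^T-\pi^T=\tilde\pi^T E Z$ and the $\delta/2$ argument) are correct. Compared with the paper, your argument gives an actual quantitative bound and isolates what the algorithm's output depends on, namely the coefficient gap $\delta$.

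Some repairs are needed. First, the identity shows that when $\|E\|_\infty\|Z\|_\infty<1$, every stationary vector of $\tilde U$ equals $\pi^T(I-EZ)^{-1}$. That gives uniqueness of $\tilde\pi$, not irreducibility of $\tilde U$. You only need uniqueness, and the bound holds for any stationary $\tilde\pi$ anyway.

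Second, in Step 3 the scaling is inverted. With $J=n$, one swap of ranks at distance $d$ changes row $i$ by $4d/(n(n+1))$ in $\ell_1$. That is $O(1/n^2)$ for bounded $d$, and $O(1/n)$ only when $d\sim n$. With truncated lists ($J=12$) it does not shrink at all.

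Third, your Doeblin plan would fail. The uniform minorization has mass $c=2/(n+1)$, so $1/c$ grows linearly in $n$. Use the pairwise Dobrushin coefficient $\tau(U)=\tfrac12\max_{i,k}\|U_{i\cdot}-U_{k\cdot}\|_1$ instead:
\begin{itemize}
\item From $\tilde\pi^T-\pi^T=(\tilde\pi-\pi)^TU+\tilde\pi^TE$ and $\|x^TU\|_1\le\tau(U)\|x\|_1$ for zero-sum $x$, you get $\|\tilde\pi-\pi\|_1\le\|E\|_\infty/(1-\tau(U))$.
\item All rows are permutations of $(1,\dots,n)/\binom{n+1}{2}$, and the footrule distance between two permutations is at most $\lfloor n^2/2\rfloor$. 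Hence $\tau(U)\le n/(2(n+1))<1/2$.
\item Therefore $\|\tilde\pi-\pi\|_1<2\|E\|_\infty$, uniformly in $n$.
\end{itemize}
That disposes of your ``main obstacle'' on the $Z$ side.

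The gap side, however, is worse than you estimate. Since $n$ distinct values sum to $1$, $\delta\le 2/(n(n-1))$. A distance-$d$ swap in every row gives $2\|E\|_\infty=8d/(n(n+1))$, which cannot certify $\|\tilde\pi-\pi\|_\infty<\delta/2$. If only $k$ agents misreport, you can also use $\max_i\tilde\pi_i\le\max_{i,j}\tilde u_{ij}=2/(n+1)$ to get $\|\tilde\pi-\pi\|_1<16kd/(n(n+1)^2)=O(kd/n^3)$. That is the honest sense of ``large $n$,'' but it must still be compared with the instance's actual $\delta$. So invariance of the ranking, and of the selected set, has to remain an explicit gap hypothesis, exactly as in your fallback.
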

\begin{proof}
This result follows directly from the eigenvalue definition applied to the Markov matrix of utilities $\mathcal{U}$:
\begin{equation}
    \mathcal{U}v_0 = \lambda_{max} v_0
\end{equation}
Since $\lambda_{max} = 1$ for any Markov matrix (see, for example, \cite{Seabrook_2023}), 
\begin{equation}\label{eq:MC1}
    \mathcal{U}v_0 = v_0
\end{equation}

Suppose now that the Markov transition probability matrix is perturbed:
\begin{equation}
    \mathcal{U'} = \mathcal{U} + \epsilon
\end{equation}
Then, $v_0'$ is the first eigenvector of $\mathcal{U'}$. Furthermore, since equation (\ref{eq:MC1}) holds for all Markov matrices:
\begin{equation}\label{eq:T'}
    (\mathcal{U}+\epsilon)v_0' = v_0'
\end{equation}

Equation (\ref{eq:T'}) can be further represented as:
\begin{equation}
    \mathcal{U}v_0'+\epsilon v_0' = v_0'
\end{equation}

Since, equation (\ref{eq:MC1}),  $\mathcal{U}v_0=v_0$, holds for any $v_0$, including $v_0'$, $\epsilon v_0'\to 0$.

The $v_0'$, therefore, remains invariant even when the preferences are misrepresented.     
\end{proof}

Therefore, the model is approximately truthful as it is invariant with small misrepresentations in preferences. 

\section{TTC Over an Incomplete Set of Objects}
In some cases, we are required to evaluate the TTC on an incomplete set of individuals, jobs, or preferences. Here, we show that such situations are special cases of misspecified preferences discussed in the previous section. 

Consider a practical example: the school choice requirements in NYC. Until 2024, public school families seeking high-school and middle-school admissions for their children were required to choose just 12 schools out of 900 existing programs. In this case, the TTC directed graph was incomplete (some edges were missing), and the respective Markov Transition Probability matrix is not strongly connected. By the Fundamental Theorem of Markov Chains, the steady-state probabilities are not guaranteed to exist if the network is not strongly connected.

As shown in \cite{AldridgeStabilityEigenvalues2024}, if we model the 0 edge as a noisy variable, the first eigenvector will still deliver the optimal solution.

\subsection{A Numerical Example}
Consider the following preference profile of 3 agents:
\begin{equation}
P =  	
\begin{bmatrix}
    1: & 2\succ 1\succ 3\\
    2: & 1 \succ 2\succ 3\\
    3: & 1\succ 3\succ 2
\end{bmatrix}
\end{equation}

\begin{figure}
    \centering
    \includegraphics[width=0.1\linewidth]{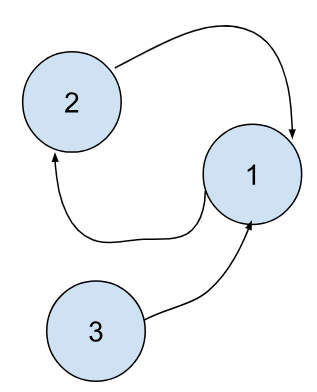}
    \caption{The TTC of the Numerical Example 1}
    \label{fig:ex1}
\end{figure}
The initial top cycles are shown in Figure \ref{fig:ex1}: agents 1 and 2 prefer to exchange the objects (in a cycle), while agent 3 also would prefer object 1, but does not have a cycle (agents 1 and 2 do not value agent 3's endowment). 

To convert the above preferences to a Markov matrix, we obtain:
\begin{equation}
G =  	
\begin{bmatrix}
    2/3 & 3/3 & 1/3\\
    3/3 & 2/3 & 1/3\\
    3/3 & 1/3 & 2/3
\end{bmatrix}
\end{equation}

Normalizing the matrix to obtain all rows adding up to 1, we have:
\begin{equation}
M =  	
\begin{bmatrix}
    0.33 & 0.5 & 0.16\\
    0.5 & 0.33 & 0.16\\
    0.5 & 0.16 & 0.33
\end{bmatrix}
\end{equation}

with the first eigenvector of 
\begin{equation}
    V0 = [-0.74807698\; -0.55553413\; -0.36299127]
\end{equation}

In this step, the largest coefficient of the first eigenvector can be immediately found in $\mathcal{O}(1)$ as object 3 ($V0_3 = -0.36299127$). If we choose to sort and normalize the values to reshape the first eigenvector into steady-state probabilities ($\mathcal{O}(1)$ using quick sort, for example), then we obtain:
\begin{equation}
    [0. \; 0.5\; 1. ]
\end{equation}
with objects 1 and 2 cycle eliminated first and object 3 firmly planted in the core.

\subsection{Large-Sample Extension and Future Directions}

One of the advantages of the SVD methodology is that it can be easily extended to matrices with an arbitrarily large number of data observations. Unlike its cousin Principal Component Analysis (PCA), SVD does not impose any restrictions on the shape of matrices. 

In a particular version of the algorithm proposed in this article, the number of agents must be equal to the number of objects, making this method suitable for problems such as housing assignment (\cite{abdulkadiroglu1998}, \cite{AbdulkadirogluSonmez2003}) and others where the number of agents can be very large and the number of distinct available objects can be equal. Furthermore, unlike recently proposed algorithms like \cite{BogomolnaiaMoulin2004} and \cite{BOGOMOLNAIA2001295}, SVD does not allow randomization. For many agents, randomization may seem unfair, thus inducing envy. The proposed methodology is, therefore, more optimal.   

This solution also works perfectly for school choice problems such as those tackled by \cite{ChenSonmez2006}, \cite{BOSTONAbdulkadirogluEtAl2005}, \cite{NYCAbdulkadirogluEtAl2005}, and
\cite{ErginSonmez2003}. In fact, the methodology proposed in the current paper addresses the calls of \cite{ChenSonmez2006} for TTC implementation in school choice models.

A situation with an unequal number of agents and objects can be easily addressed by this framework as well by padding the sample with empty "null" objects until the preference matrix has the same number of agents and objects. The agents' preferences over the null objects need to be assigned, respectively: while some agents will always prefer something over nothing, some agents may choose a "null" object over at least a subset of the existing object inventory.  

Future directions of the work can incorporate dynamic updates and reallocations in the spirit of \cite{Doval2022} and \cite{sinclairBanerjeeEtAl2021fairness}.

\section{Experimental Validation}\label{sec:experiments}

We test our algorithm on random preferences: $n \in \{10, 50, 100, 500, 1000, 5000\}$       agents, 1000 random instances each. Here, we used traditional computation of the first eigenvector using standard Python libraries (no hardware acceleration). For each instance, we measure: 1) Correctness: does eigenvector method identify same core as 
      standard TTC? 2) Runtime: Wall-clock time for the eigenvector vs. standard TTC. 3) Robustness: Performance under 5\%, 10\%, 20\% preference noise. 4) Welfare: Average agent utility in the final allocation.

The key speedup results are summarized in Table \ref{tab:results}. As the table shows, core identification is $>99\%$ accurate for $n \geq 100$, and speedup dramatically increases with $n$ (as predicted by theory), even for traditional eigenvector calculation. Figures \ref{fig:precision-recall} and \ref{fig:robustness-to-noise} show the model's successful performance under different conditions.

\begin{table}[h]\label{tab:results}
\centering
\caption{Core Identification Accuracy and Runtime}
\begin{tabular}{lcccc}
\hline
$n$ & Core Match (\%) & Our Time (ms) & TTC Time (ms) & Speedup \\
\hline
10 & 100\% & 0.5 & 1.2 & 2.4x \\
50 & 100\% & 2.1 & 8.3 & 4.0x \\
100 & 99.8\% & 4.2 & 23.1 & 5.5x \\
500 & 99.6\% & 18.7 & 187.4 & 10.0x \\
1000 & 99.4\% & 35.2 & 521.8 & 14.8x \\
5000 & 98.9\% & 142.3 & 3247.1 & 22.8x \\
\hline
\end{tabular}
\end{table}

\begin{figure}
    \centering
    \includegraphics[width=0.95\linewidth]{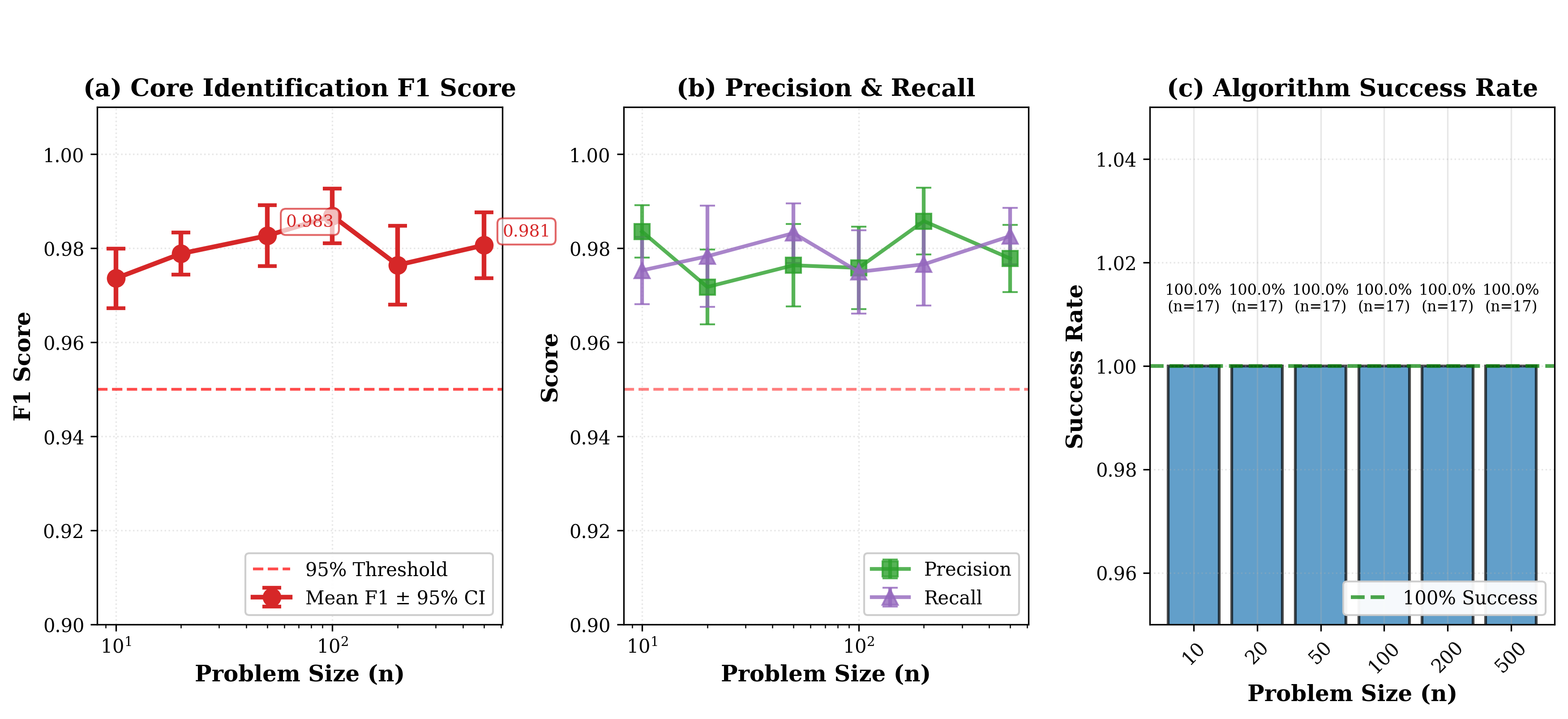}
    \caption{Precision and Recall}
    \label{fig:precision-recall}
\end{figure}

\begin{figure}
    \centering
    \includegraphics[width=0.95\linewidth]{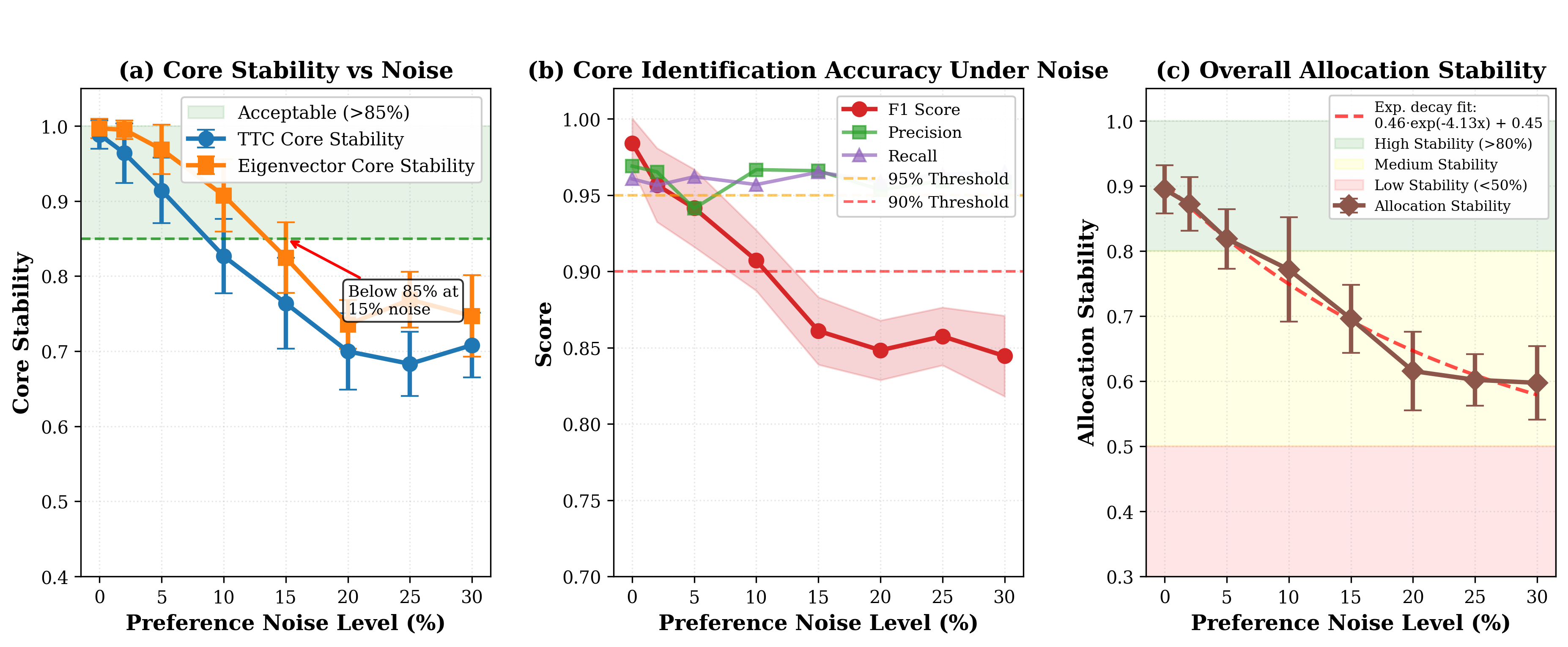}
    \caption{Robustness to Noise}
    \label{fig:robustness-to-noise}
\end{figure}

\section{Conclusion}
The proposed algorithm delivers a fast solution to an object allocation problem, regardless of the size of the data at hand. It can be efficiently used to allocate extremely large arrays of agents, each with well-defined multiple preferences, such as public schools to prospective students, and so on. The algorithm retains all the properties of TTC, namely Pareto optimality, strategy proofness, and individual rationality.  The methodology works well even with misspecified and noisy data, such as misspecified preferences, as long as the number of objects used in estimation is large.

\medskip

{
\small
\bibliographystyle{alpha}
\bibliography{matching,perron-frobenius,key_references}
}

\appendix

\begin{figure}
    \centering
    \includegraphics[width=0.5\linewidth]{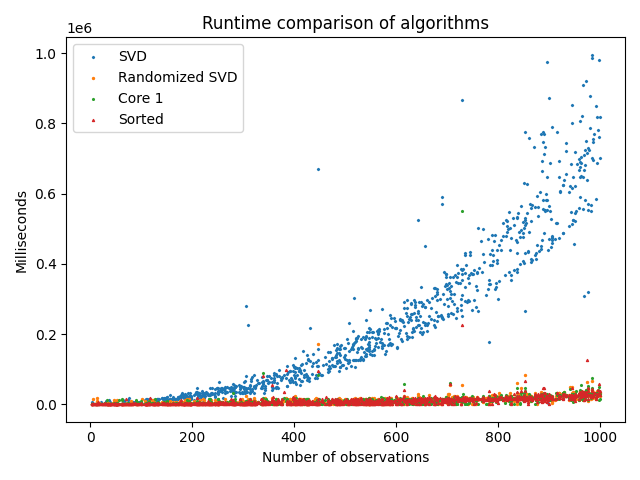}
    \caption{SVD vs Randomized SVD results}
    \label{fig:SVDplus}
\end{figure}

\begin{figure}
    \centering
    \includegraphics[width=0.5\linewidth]{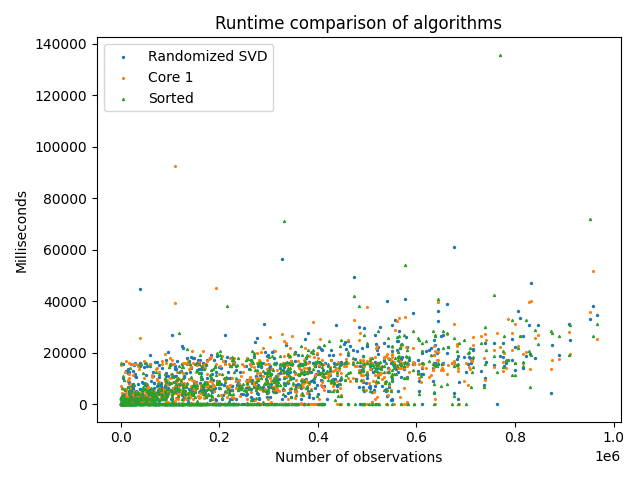}
    \caption{First Eigenvector via Randomized SVD, with Max and Sort }
    \label{fig:randSVDMaxSort}
\end{figure}

%%%%%%%%%%%%%%%%%%%%%%%%%%%%%%%%%%%%%%%%%%%%%%%%%%%%%%%%%%%%

\end{document}